\newtheorem{theorem}{Theorem}
\newtheorem{lemma}[theorem]{Lemma}
\begin{document}
\title{ Linear Precoding Gain for Large MIMO Configurations
 with QAM and Reduced Complexity}
\author{Thomas~Ketseoglou,~\IEEEmembership{Senior~Member,~IEEE,} and Ender~Ayanoglu,~\IEEEmembership{Fellow,~IEEE}

\thanks{T. Ketseoglou is with the Electrical and Computer Engineering Department, California State Polytechnic University, Pomona, California (e-mail: tketseoglou@csupomona.edu). E. Ayanoglu is
with the Center for Pervasive Communications and
Computing, Department of Electrical Engineering and Computer Science,
University of California, Irvine (e-mail: ayanoglu@uci.edu).}
}


\maketitle
\begin{abstract}
In this paper, the problem of designing a linear precoder for Multiple-Input Multiple-Output (MIMO) systems in conjunction with Quadrature Amplitude Modulation (QAM) is addressed. First, a novel and efficient methodology
to evaluate the input-output mutual information for a general Multiple-Input Multiple-Output (MIMO) system as well as its corresponding gradients is presented, based on the Gauss-Hermite quadrature rule. Then, the method is exploited in a block coordinate gradient ascent optimization process to determine the globally optimal linear precoder with respect to the MIMO input-output mutual information for QAM systems with relatively moderate MIMO channel sizes. The proposed methodology is next applied in conjunction with the complexity-reducing per-group processing (PGP) technique to both perfect channel state information at the transmitter (CSIT) as well as statistical channel state information (SCSI) scenarios, with large transmitting and receiving antenna size, and for constellation size up to $M=64$.
We show by numerical results that the precoders developed offer significantly better performance than the configuration with no precoder as well as the maximum diversity precoder for QAM with constellation sizes $M=16,~32$, and $~64$ and for MIMO channel size up to $100\times100$.
\end{abstract}
\IEEEpeerreviewmaketitle
\section{{INTRODUCTION}}
The concept of Multiple-Input Multiple-Output (MIMO) systems still represents a prevailing research direction in wireless communications due to its ever increasing capability to offer higher rate, more efficient communications, as measured by spectral utilization, and under low transmitting or receiving power. Within MIMO research,
the problem of designing an optimal linear precoder toward maximizing the mutual information between the input and output was considered in \cite{Verdu1,Verdu2} where the optimal power allocation strategies are presented (e.g., Mercury Waterfilling (MWF)), together with general equations for the optimal precoder design. In addition, \cite{Payaro} also considered precoders for mutual information maximization and showed that the left eigenvectors of the optimal precoder can be set equal to the right eigenvectors of the channel. Finally, in \cite{Fettweis}, a mutual information maximizing precoder for a parallel layer MIMO detection system is presented reducing the performance gap between maximum likelihood and parallel layer detection.

Recently, globally optimal linear precoding techniques were presented \cite{Xiao,Lamarca} for scenarios employing perfect channel state information available at the transmitter (CSIT)\footnote{Under CSIT the transmitter has perfect knowledge of the MIMO channel realization at each transmission.} with finite alphabet inputs, capable of achieving mutual information rates
much higher than the previously presented MWF \cite{Verdu1} techniques by introducing input symbol correlation through a unitary input
transformation matrix in conjunction with channel weight adjustment (power allocation). In addition, more recently, \cite{Max} has presented an iterative algorithm for precoder optimization for sum rate maximization of  Multiple Access Channels (MAC) with Kronecker MIMO channels. Furthermore, more recent work has shown that when only statistical channel state information (SCSI)\footnote{SCSI pertains to the case in which the transmitter has knowledge of only the MIMO channel correlation matrices \cite{Khan,Weich} and the thermal noise variance.} is available at the transmitter, in asymptotic conditions when the number of transmitting and receiving antennas grows large, but with a constant transmitting to receiving antenna number ratio, one can design the optimal precoder by looking at an equivalent constant channel and its corresponding adjustments as per the pertinent theory \cite{SCSI}, and applying a modified expression for the corresponding ergodic mutual information evaluation over all channel realizations. This development allows for a precoder optimization under SCSI in a much easier way \cite{SCSI}. However, existing research in the area does not provide any results of optimal linear precoders in the case of QAM with constellation size $M\geq 16$, with the exception of \cite{Lozano}. In past research work, a major impediment toward developing optimal precoders for QAM has been a lack of an accurate and efficient technique toward input-output mutual information evaluation, its gradients, and evaluation of the input-output minimum mean square error (MMSE) covariance matrix, as required by the precoder optimization algorithm and other algorithms involved, e.g., the equivalent channel determination in the SCSI case \cite{SCSI}.

In this paper, we propose optimal linear precoding techniques for MIMO, suitable for QAM with constellation size $M\geq 16$. An additional advantage of these techniques is their ability to accommodate MIMO configurations with very large antenna sizes, e.g., $100\times 100$. The only related work in this area is \cite{Lozano} which has antenna sizes up to $32\times 32$. We show in the sequel that the proposed method is faster than the one in \cite{Lozano}. Carrying out this calculation  has been very difficult to do until now due to the complexity involved in tackling this problem. Our approach entails a novel application of the Hermite-Gauss quadrature rule \cite{G_H} which offers a very accurate and efficient way to evaluate the capacity of a MIMO system with QAM. We then apply this technique within the context of a block gradient ascent method \cite{Boyd} in order to determine the globally optimal linear precoder for MIMO systems, in a similar fashion to \cite{Xiao}, for systems with CSIT and small antenna size. We show that for $M=16,~32$, and $64$ QAM, the optimal linear precoder offers $50\%$ better mutual information than the maximal diversity precoder (MDP) of \cite{Giann_max_div} and the no-precoder case, at low signal-to-noise ratio (SNR) for a standard $2\times2$ MIMO channel, however the absolute utilization gain achieved is lower than $1~b/s/Hz$. We then proceed to show that significantly higher gains are available for different channels, e.g., a utilization gain of $1.30~b/s/Hz$ at $SNR=10~dB$, when $M=16$. We then employ larger antenna configurations, e.g., up to $40\times40$ with CSIT and $M=16,~32,$ and $64$ together with the complexity reducing technique of per-group processing (PGP) which was originally presented in \cite{TE_TWC}, and show very high gains available with reduced system complexity. Finally, we also employ SCSI scenarios in conjunction with PGP and show very significant gains for large antenna sizes, e.g., $100 \times 100$ and $M=16,~32, ~64$. The main advantages of our work compared with other interesting proposals for large MIMO sizes, e.g., \cite{Lozano}, lie over three main directions: a) It offers a globally optimal precoder solution for each subgroup, instead of a locally optimal one, b) It is faster, c) It allows for larger constellation size, e.g., $M=32,~64$, and d) It allows larger MIMO configurations, e.g., $100\times 100$.

The paper is organized as follows: Section II presents the system model and problem statement. Then, in Section III, we present a novel Gauss-Hermite approximation to the evaluation of the input-output mutual information of a MIMO system that allows for fast, but otherwise very accurate evaluation of the input-output mutual information of a MIMO system, and thus represents a major facilitator toward determining the globally optimal linear precoder for LDPC MIMO. In Section IV, we present numerical results for the globally optimal precoder that implements the Gauss-Hermite approximation in the block coordinate gradient ascent method. Finally, our conclusions are presented in Section V.

\section{{SYSTEM MODEL AND PROBLEM STATEMENT}}
The instantaneous $N_t$ transmit antenna, $N_r$ receive antenna MIMO model is described by the following equation
\begin{eqnarray}
{\mathbf y} = {\mathbf H}{\mathbf G}{\mathbf x}+{\mathbf n}, \label{eq_1}
\end{eqnarray}
where ${\mathbf y}$ is the $N_r\times 1$ received vector, ${\mathbf H}$ is the $N_r\times N_t$ MIMO channel matrix,
 ${\mathbf G}$ is the precoder matrix of size $N_t\times N_t$, ${\mathbf x}$ is the $N_t\times 1$ data vector with independent components
each of which is in the QAM
constellation of size $M$, ${\mathbf n}$ represents the circularly symmetric complex Additive White Gaussian Noise (AWGN) of size $N_r\times 1$, with mean zero and
covariance matrix ${\mathbf K}_n = \sigma^2_n {\mathbf I}_{N_r}$, where ${\mathbf I}_{N_r}$ is the $N_r\times N_r$ identity matrix, and $\sigma^2 =\frac{1}{SNR}$, $SNR$ being the
(coded) symbol signal-to-noise ratio. In this paper, a number of different channels will be considered, e.g., channels comprising independent complex Gaussian components or spatially correlated Kronecker-type channels \cite{Khan} (including those similar to the 3GPP spatial correlation model (SCM) \cite{SCM}), or more generally Weichselberger channels \cite{Weich}. The precoding matrix ${\mathbf G}$ needs to satisfy the following power constraint
\begin{equation}
\mathrm {tr}({\mathbf G}{\mathbf G}^h)=N_t,
\end{equation}
where $\mathrm {tr}({\mathbf A})$,  ${\mathbf A}^h$ denote the trace and the Hermitian transpose of matrix ${\mathbf A}$, respectively.
An equivalent model called herein the ``virtual'' channel is given by \cite{Xiao}
\begin{equation}
{\mathbf y} = {\boldsymbol \Sigma}_H {\boldsymbol \Sigma}_G {\bf V}_G^h {\mathbf x} + {\mathbf n}, \label{eq_eq}
\end{equation}
where ${\boldsymbol \Sigma}_H$ and ${\boldsymbol \Sigma}_G$ are diagonal matrices containing the singular values of ${\mathbf H}$, ${\mathbf G}$, respectively and
${\mathbf V}_G$ is the matrix of the right singular vectors of ${\mathbf G}$. When LDPC is employed in this MIMO system, the overall utilization in $b/s/Hz$ is determined by the mutual information between the transmitting branches ${\mathbf x}$ and the receiving ones, ${\mathbf y}$ \cite{tenBrink,Alexei}. It is shown \cite{Xiao} that the mutual information between ${\mathbf x}$ and ${\mathbf y}$, for channel realization ${\mathbf H}$,
$I({\mathbf x};{\mathbf y})$,
is only a function of ${\mathbf W} = {\mathbf V}_G {\boldsymbol \Sigma}_H^2 {\boldsymbol \Sigma}_G^2 {\mathbf V}_G^h$.
The optimal CSIT precoder ${\mathbf G}$ is found by solving:
\begin{eqnarray}
\begin{aligned}
& \underset{\mathbf G}{\text{maximize}}
& & I({\mathbf x};{\mathbf y})\\
& \text{subject to}
& &  \mathrm {tr}({\mathbf G} {\mathbf G}^h) = N_t, \\ \label{eq_orig}
\end{aligned}
\end{eqnarray}
called the ``original problem,'' and
\begin{eqnarray}
\begin{aligned}
& \underset{{\mathbf V}_G, {\boldsymbol \Sigma}_G}{\text{maximize}}
& & I({\mathbf x};{\mathbf y})\\
& \text{subject to}
& &  \mathrm {tr}({\boldsymbol \Sigma}_G^2 ) = N_t, \label{eq_global}
\end{aligned}
\end{eqnarray}
called the ``equivalent problem,'' where the reception model of (\ref{eq_eq}) is employed.
The solution to (\ref{eq_orig}) or (\ref{eq_global}) results in exponential complexity at both transmitter and receiver, and it becomes especially difficult for QAM with constellation size $M \geq 16$ or large MIMO configurations. A major difficulty in the QAM case stems from the fact that there are multiple evaluations of $I({\mathbf x};{\mathbf y})$ in the block coordinate ascent method employed for determining the globally optimal precoder. More specifically, for each block coordinate gradient ascent iteration, there are two line backtracking searches required \cite{Xiao}, which demand one $I({\mathbf x};{\mathbf y})$ plus its gradient evaluations per search trial, and one additional evaluation at the end of a successful search per backtracking line search. Thus, the need of a fast, but otherwise very accurate method of calculating $I({\mathbf x};{\mathbf y})$ and its gradients prevails as instrumental toward determining the globally optimal linear precoder for CSIT. In the SCSI case, the corresponding optimization problem becomes
\begin{eqnarray}
\begin{aligned}
& \underset{\mathbf G}{\text{maximize}}
& & {\mathbb E}_{\mathbf H}\left\{I({\mathbf x};{\mathbf y})\right\}\\
& \text{subject to}
& &  \mathrm {tr}({\mathbf G} {\mathbf G}^h) = N_t, \\ \label{eq_SCSI}
\end{aligned}
\end{eqnarray}
where the expectation is performed over all the channels ${\mathbf H}$. The ground-breaking work of \cite{SCSI} has shown that the problem in (\ref{eq_SCSI}) for large antenna sizes can be solved by an approximate way of calculating the ergodic mutual information ${\mathbb E}_{\mathbf H}\left\{I({\mathbf x};{\mathbf y})\right\}$ for a fixed precoding matrix ${\mathbf G}$ through well-determined parameters of a deterministic channel. These parameters include the mutual information of the corresponding deterministic channel, i.e., a CSIT scenario. Thus, methods that offer simplification of CSIT mutual information evaluation,  $I({\mathbf x};{\mathbf y})$, are also important in the SCSI case toward determining the globally optimal linear precoder for LDPC MIMO.

\section{ACCURATE APPROXIMATION TO $I({\mathbf x};{\mathbf y})$ FOR MIMO SYSTEMS BASED ON GAUSS-HERMITE QUADRATURE}
In Appendix A we prove that by applying the Gauss-Hermite quadrature theory for approximating the integral of a Gaussian function multiplied with an arbitrary real function $f(x)$, i.e.,
\begin{equation}
F \doteq \int_{-\infty}^{+\infty}\exp(-x^2)f(x)dx,
\end{equation}
which is approximated in the Gauss-Hermite approximation with $L$ weights and nodes as
\begin{equation}
F \approx \sum_{l=1}^L c(l) f(v_l) = {\mathbf c}^t {\mathbf f},
\end{equation}
with ${\mathbf c}=[c(1) \cdots c{(L)}]^t$, $\{v_l\}_{l=1}^L$, and ${\mathbf f} = [f(v_1) \cdots f(v_L)]^t,$ being the vector of the weights, the nodes, and function node values, respectively (see Appendix A),
a very accurate approximation is derived for $I({\mathbf x};{\mathbf y})$ in a MIMO system, as presented in the following lemma. Let us first introduce some notations that make the overall understanding easier. Let ${\mathbf n}_e$ denote the equivalent to ${\mathbf n}$, real vector of length $2N_r$ derived from ${\mathbf n}$ by separating its real and imaginary parts as follows
\begin{equation}
{\mathbf n}_e = [n_{r1} ~n_{i1}\cdots n_{rN_r}~n_{iN_r}]^t,
\end{equation}
with $n_{rv},~n_{iv}$ being the values of the real, imaginary part of the $v$th ($1\leq v\leq N_r$) element of ${\mathbf n}$, respectively. Let us also define the real vector ${\mathbf v}(\{k_{rv},~k_{iv}\}_{v=1}^{N_r})$ of length $2N_r$ defined as follows
\begin{equation}
{\mathbf v}(\{k_{rv},~k_{iv}\}_{v=1}^{N_r}) = [v_{kr1}~v_{ki1},\cdots , v_{krN_r}~v_{kiN_r}]^t,
\end{equation}
with $k_{rv},~k_{iv}$ ($1\leq v \leq N_r$) being permutations of indexes in the set $\{1,2,\cdots,L\}$. Then the following lemma is true concerning the Gauss-Hermite approximation for $I({\mathbf x};{\mathbf y})$.
\begin{lemma}
For the MIMO channel model presented in (\ref{eq_1}), the Gauss-Hermite approximation for $I({\mathbf x};{\mathbf y})$ with $L$ nodes per receiving antenna is given as
\begin{equation}\small
\begin{split}
&I({\mathbf x};{\mathbf y})  \approx N_t\log_2(M) -\frac{N_r}{\log(2)} -\frac{1}{M^{N_t}}\sum_{k=1}^{M^{N_t}}{\hat f}_k,\\
\label{eq_PAPER}
\end{split}
\end{equation}
where
\begin{equation}
\begin{split}
{\hat f}_k =&  \left(\frac{1}{\pi}\right)^{N_r}\sum_{k_{r1}=1}^{L} \sum_{k_{i1}=1}^{L}\cdots \sum_{k_{rN_r}=1}^{L} \sum_{k_{iN_r}=1}^{L}  c(k_{r1})c(k_{i1})\cdots c(k_{rN_r})\\
&\times c(k_{iN_r})g_k(\sigma n_{{k_{r1}}}, \sigma n_{{k_{i1}}},\cdots,\sigma n_{{k_{rN_r}}}, \sigma n_{{k_{iN_r}}}), \label{eq_f}
\end{split}
\end{equation}
with
\begin{equation}
g_k(\sigma v_{{k_{r1}}}, \sigma v_{{k_{i1}}},\cdots,\sigma v_{{k_{rN_r}}}, \sigma v_{{k_{iN_r}}})
\end{equation}
being the value of the function
\begin{equation}
\log_2\left(\sum_{m}\exp (-\frac{1}{\sigma^2}||{\mathbf n}-{\mathbf H}{\mathbf G}({\mathbf x}_k-{\mathbf x}_m)||^2)\right) \label{eq_basic}
\end{equation}
evaluated at ${\mathbf n}_e =\sigma{\mathbf v}(\{k_{rv},~k_{iv}\}_{v=1}^{N_r})$.
\end{lemma}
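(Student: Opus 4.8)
The plan is to begin from the exact discrete-input mutual-information identity and then turn the only intractable term into a Gaussian expectation that the product Gauss-Hermite rule discretizes. First I would write $I(\mathbf{x};\mathbf{y})=\mathbb{E}_{\mathbf{x},\mathbf{y}}\{\log_2[p(\mathbf{y}|\mathbf{x})/p(\mathbf{y})]\}$, using that $\mathbf{x}$ is uniform over the $M^{N_t}$ constellation points $\{\mathbf{x}_k\}$ and that, conditioned on $\mathbf{x}=\mathbf{x}_k$, the model (\ref{eq_1}) gives $\mathbf{y}=\mathbf{H}\mathbf{G}\mathbf{x}_k+\mathbf{n}$ with $\mathbf{n}$ circularly symmetric Gaussian of covariance $\sigma^2\mathbf{I}_{N_r}$. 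Inserting the Gaussian likelihoods $p(\mathbf{y}|\mathbf{x}_m)\propto\exp(-\|\mathbf{y}-\mathbf{H}\mathbf{G}\mathbf{x}_m\|^2/\sigma^2)$ into the mixture $p(\mathbf{y})=M^{-N_t}\sum_m p(\mathbf{y}|\mathbf{x}_m)$ and using $\mathbf{y}-\mathbf{H}\mathbf{G}\mathbf{x}_m=\mathbf{n}+\mathbf{H}\mathbf{G}(\mathbf{x}_k-\mathbf{x}_m)$, the $(\pi\sigma^2)^{N_r}$ normalizations cancel and the log-ratio reduces to $N_t\log_2 M$ minus $\|\mathbf{n}\|^2/(\sigma^2\log 2)$ minus a log-sum-exp term. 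Taking $\mathbb{E}_{\mathbf{n}}\{\|\mathbf{n}\|^2\}=N_r\sigma^2$ produces the constant $-N_r/\log(2)$ in (\ref{eq_PAPER}), averaging over $k$ supplies the $-M^{-N_t}\sum_k$ prefactor, and what remains to approximate is exactly the noise expectation of the log-sum-exp, i.e.\ $\hat f_k$.

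Second, I would cast this expectation, an integral against the complex Gaussian density, into the canonical weight $\exp(-x^2)$. Splitting each complex noise entry into real and imaginary parts yields the $2N_r$-dimensional vector $\mathbf{n}_e$ whose entries are independent real Gaussians of variance $\sigma^2/2$, with joint density proportional to $\prod_v\exp[-(n_{rv}^2+n_{iv}^2)/\sigma^2]$. The substitution $n_{rv}=\sigma x_{rv}$, $n_{iv}=\sigma x_{iv}$ turns every one-dimensional factor into $\exp(-x^2)$ and contributes a Jacobian that, combined with the $(\pi\sigma^2)^{-N_r}$ normalization, collapses to the prefactor $(1/\pi)^{N_r}$ in (\ref{eq_f}); the integrand becomes $g_k$, namely the function (\ref{eq_basic}) evaluated at $\mathbf{n}_e=\sigma\mathbf{v}(\{k_{rv},k_{iv}\}_{v=1}^{N_r})$.

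Third, I would invoke the one-dimensional rule $F\approx\sum_l c(l)f(v_l)$ from the excerpt on each of the $2N_r$ coordinates in turn, i.e.\ the product (tensor) form of Gauss-Hermite quadrature: each application replaces one integral by its weighted node sum, and iterating over all coordinates produces the nested $2N_r$-fold sum in which the weight product $c(k_{r1})c(k_{i1})\cdots c(k_{rN_r})c(k_{iN_r})$ multiplies $g_k$ at the tensor-grid node, which is precisely (\ref{eq_f}). The only cosmetic discrepancy is that the exact reduction carries $\mathbf{n}+\mathbf{H}\mathbf{G}(\mathbf{x}_k-\mathbf{x}_m)$ whereas (\ref{eq_basic}) carries $\mathbf{n}-\mathbf{H}\mathbf{G}(\mathbf{x}_k-\mathbf{x}_m)$; the two agree after $\mathbf{n}\to-\mathbf{n}$, which leaves the expectation unchanged by the circular symmetry of the noise and is in any case immaterial since the Hermite nodes are symmetric about the origin.

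I expect the main difficulty to be bookkeeping rather than conceptual: one must track the factor of two in the real and imaginary variances so that the correct scaling is $\sigma$ and not $\sigma/\sqrt{2}$, and verify that the normalization cleanly reduces to $(1/\pi)^{N_r}$. Conceptually, the subtle point is that the product Gauss-Hermite rule is exact only for tensor polynomials up to the per-axis degree, so the result must remain an \emph{approximation}: the algebraic steps above give an identity, but the $\approx$ in (\ref{eq_PAPER}) rests on the smoothness of the log-sum-exp integrand and on $L$ being large enough, which I would justify by appealing to the one-dimensional error bound established in Appendix A.
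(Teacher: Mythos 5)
Your proposal is correct and follows essentially the same route as the paper's Appendix A: reduce $I({\mathbf x};{\mathbf y})$ to Gaussian expectations of log-sum-exp terms, factor the complex noise into $2N_r$ independent real coordinates, rescale by $\sigma$ so the density becomes the canonical $\exp(-x^2)$ weight with the $(1/\pi)^{N_r}$ prefactor, and apply the one-dimensional Gauss-Hermite rule in tensor-product form across all coordinates. The only difference is cosmetic: the paper imports the conditional-entropy expression directly from \cite{Xiao}, whereas you re-derive it from the definition of mutual information (and correctly dispose of the resulting $\pm$ sign discrepancy in the argument of the log-sum-exp by the sign symmetry of the noise and of the Hermite nodes).
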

The proof of this lemma is presented in Appendix A, together with a simplification available for this expression in the $N_t=N_r=2$ case.
Let us stress that, the presented novel application of the Gauss-Hermite quadrature in the MIMO model allows for efficient evaluation of $I({\mathbf x};{\mathbf y})$ for any channel matrix ${\mathbf H}$, and precoder ${\mathbf G}$, as required in the precoder optimization process, as explained below.

\section{GLOBAL OPTIMIZATION OVER ${\mathbf G}$ TOWARD MAXIMUM $I({\mathbf x};{\mathbf y})$ FOR QAM}
\subsection{Description of the Globally Optimal Precoder Method}
Similarly to \cite{Xiao}, we follow a block coordinate gradient ascent maximization method to find the solution to the optimization problem described in (\ref{eq_orig}), employing the virtual model of (\ref{eq_eq}). It is proven in \cite{Xiao} that $I({\mathbf x};{\mathbf y})$ is a concave function over
${\mathbf W}$ and ${\boldsymbol \Sigma}_G^2$. It thus becomes efficient to employ two different gradient ascent methods, one for ${\mathbf W}$, and another one for ${\boldsymbol \Sigma}_G^2$. We employ ${\boldsymbol \Theta}$ and ${\boldsymbol \Sigma} $ to denote ${\mathbf V}_G^h$ and ${\boldsymbol \Sigma}_G^2$, respectively, evaluated during the execution of the optimization algorithm.

The value of the step of each iteration over ${\mathbf W}$ and ${\boldsymbol \Sigma}_G^2$ is determined through backtracking line searches, one for each of the variables ${\mathbf W}$ and ${\boldsymbol \Sigma}_G^2$. We describe the backtracking line search for ${\mathbf W}$ first, then the one over ${\boldsymbol \Sigma}_G^2$. At each iteration of the globally optimal algorithm over ${\mathbf W}$, the gradient of $I({\mathbf x};{\mathbf y})$
over ${\mathbf W}$, $\nabla_{\mathbf W}I$, is required. We employ a novel method in determining $\nabla_{\mathbf W}I$ at each iteration, as described in detail in the next subsection.  We then select two parameters, $\alpha_1$ and $\beta_1$, both smaller than one and positive, and perform
the backtracking line search as follows: At each new trial, a parameter $t_1 > 0$ that represents the step size is updated by multiplying it with $\beta_1$. The initial value for $t_1$ is equal to 1. Then the algorithm checks if
\begin{equation}
I(\mathbf W + t\nabla_{\mathbf W}I) > I(\mathbf W) + \alpha_1 t ||\nabla_{\mathbf W}I||_F^2,
\end{equation}
where $||\nabla_{\mathbf W}I||_F$ is the Frobenius norm of $\nabla_{\mathbf W}I$ \cite{diff} and we used the notation $I({\mathbf W})$ to mean the value of $I({\mathbf x};{\mathbf y})$ at a fixed ${\mathbf W}$. If the condition is satisfied, the algorithm proceeds with calculating a new ${\boldsymbol \Theta}_{NEW}$ from ${\mathbf W}_{NEW} = {\mathbf W} + t_1 \nabla_{\mathbf W}I$, as follows:
\begin{equation}
{\mathbf W}_{NEW} = {\boldsymbol \Theta}_{NEW}^h {\boldsymbol \Sigma}_H^2 {\boldsymbol \Sigma}_G^2 {\boldsymbol \Theta}_{NEW} \label{eq_bck_1}
\end{equation}
by employing the eigenvalue decomposition (EVD) of ${\mathbf W}_{NEW}$, it updates ${\mathbf V}_G^h = {\boldsymbol \Theta}_{NEW}$,
and then it proceeds to the backtracking line search over ${\boldsymbol \Sigma}_G^2$. If the condition is not satisfied, the search updates $t_1$ to its new and smaller value, $\beta_1 t_1$, and
repeats the check on the condition, until the condition is satisfied or a maximum number of attempts in the first loop, $n_1$, has been reached. Then, the backtracking line search on ${\boldsymbol \Sigma}_G^2$ takes place in a fashion similar to the search described for ${\mathbf W}$, but with some ramifications. First, based on the second backtracking line search loop parameters
$\alpha_2$ and $\beta_2$, the backtracking line search is as follows:
At each new trial, a parameter $t_2 > 0$ that represents the step size is updated by multiplying it with $\beta_2$. The initial value for $t_2$ is equal to 1. Then the algorithm checks if
\begin{equation}
I(\mathbf W + t\nabla_{{\boldsymbol \Sigma}_G^2}I) > I(\mathbf W) + \alpha_2 t ||\nabla_{{\boldsymbol \Sigma}_G^2}I||_F^2, \label{eq_bck_2}
\end{equation}
where $||\nabla_{{\boldsymbol \Sigma}_G^2}I||_F$ is the Frobenius norm of $\nabla_{{\boldsymbol \Sigma}_G^2}I$ \cite{diff}. If the condition is satisfied, the algorithm proceeds with updating to the new ${\boldsymbol \Sigma}_{NEW}$, but after setting any negative terms in the main diagonal of ${\boldsymbol \Sigma}_{NEW}$ to zero and renormalizing the remaining main diagonal entries. If the condition is not satisfied, the search updates $t_2$ to its new and smaller value, $\beta_2 t_2$ and
repeats the check on the condition, until the condition is satisfied or a maximum number of attempts in the second loop, $n_2$ has been reached.
We thus see that there will in general be multiple evaluations of $I({\mathbf x};{\mathbf y})$, until the searches satisfy the conditions set or the maximum number of attempts allowed in a search has been reached. This explains the importance behind the requirement for an algorithm capable of
efficient calculation of $I({\mathbf x};{\mathbf y})$. In addition, as the parameters $\alpha_1,~\alpha_2$, $\beta_1,~\beta_2$ need to be optimized for faster and more efficient
execution of the globally optimal precoder optimization, this requirement becomes even more essential. Finally, the role of $n_1,n_2$ is also very important as when the number of attempts within each loop grows, the corresponding differential value of the parameter decreases and after a few attempts, the corresponding value of the step size is almost zero. By employing the proposed approach the
possibility of finding the globally optimal precoder for large QAM constellations with $M \leq 16$ or large MIMO configurations becomes reality, as our results demonstrate. The algorithm's pseudocode for a number of iterations $t$ is presented under the heading Algorithm 1.
\begin{algorithm}[h]
\caption{Global precoder optimization algorithm with $t$ iterations}\label{global}
\begin{algorithmic}[1]
\Procedure{Precoder}{${\boldsymbol \Sigma}_H$}
\While{$i \leq t $}
\State Determine ${\mathbf W}_{NEW}={\boldsymbol \Theta }^h {\boldsymbol \Sigma}^2  {\boldsymbol \Theta }$ through backtracking line search (\ref{eq_bck_1})
\State Set ${\mathbf V}_G^h = {\boldsymbol \Theta}$
\State Determine ${\mathbf W}_{NEW}={\mathbf V}_G {\boldsymbol \Sigma}_G^2  {\boldsymbol \Sigma}_H^2 {\mathbf V}_G^h$
\State Determine ${\boldsymbol \Sigma}_{NEW,G}^2$ through backtracking line search (\ref{eq_bck_2})
\State Set negative entries on the diagonal of ${\boldsymbol \Sigma}_{NEW,G}^2$ to zero
\State Normalize ${\boldsymbol \Sigma}_{NEW,G}^2$ to a trace equal to $N_t$
\State Set ${\boldsymbol \Sigma}_G = {\boldsymbol \Sigma}_{NEW,G}$
\State Determine ${\mathbf W}_{NEW}={\mathbf V}_G {\boldsymbol \Sigma}_G^2  {\boldsymbol \Sigma}_H^2 {\mathbf V}_G^h$
\State Set  ${\mathbf W} = {\mathbf W}_{NEW}$
\State Evaluate $I({\mathbf W})$
\EndWhile\label{globalendwhile}
\State \textbf{return} $I({\mathbf W})$
\EndProcedure
\end{algorithmic}
\end{algorithm}
\subsection{Determination of $\nabla_{\mathbf W}I,~\nabla_{{\boldsymbol \Sigma}_G^2}I$}
We first set ${\mathbf M} = {\mathbf W}^{\frac{1}{2}}$. Then, it is easy to see that $I$ is a function of ${\mathbf M}$ (see, e.g., \cite{Xiao3} where the notion of sufficient statistic is employed to show that $I({\mathbf x};{\mathbf y})$ depends on ${\mathbf W}$).
The derivation of $\nabla_{\mathbf W}I$ is presented in Appendix B. The proof is based on the following theorem\footnote{The theorem applies without loss of generality to the $N_t=N_r$ case. If $N_t\neq N_r$, then ${\boldsymbol \Sigma}_H$ needs to be either shrunk, or extended in size, by elimination or addition of zeros, respectively.}.
\setcounter{theorem}{0}
\begin{theorem} \label{thm_1}
Substituting ${\mathbf M}= {\mathbf V}_G {\boldsymbol \Sigma}_H {\boldsymbol \Sigma}_G {\mathbf V}_G^h= {\mathbf W}^{\frac{1}{2}}$ for ${\mathbf H}{\mathbf G}$ in (\ref{eq_PAPER}) results in the same value of $I({\mathbf x};{\mathbf y})$. In other words, since ${\mathbf M}$ is a function of ${\mathbf H},~{\mathbf G}$, ${\mathbf M}{\mathbf y}$ is a sufficient statistic for ${\mathbf y}$.
\end{theorem}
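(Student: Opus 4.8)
The plan is to prove Theorem~\ref{thm_1} by establishing the general principle that the value returned by (\ref{eq_PAPER}) depends on the matrix occupying the slot of ${\mathbf H}{\mathbf G}$ only through its Hermitian Gram product, and then verifying that ${\mathbf M}$ and ${\mathbf H}{\mathbf G}$ induce the same Gram matrix ${\mathbf W}$. Since (\ref{eq_basic}) enters the mutual-information expression solely through the squared distances $\|{\mathbf n}-{\mathbf H}{\mathbf G}({\mathbf x}_k-{\mathbf x}_m)\|^2$ averaged over a white Gaussian ${\mathbf n}$, the value is invariant under any left multiplication of the channel by a unitary matrix; equivalently it is a function of $({\mathbf H}{\mathbf G})^h({\mathbf H}{\mathbf G})$ alone. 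This is the analytic content behind the ``only a function of ${\mathbf W}$'' statement already recorded from \cite{Xiao}, and the sufficient-statistic clause in the theorem is exactly what makes it precise.

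The cleanest self-contained route I would take is constructive, through a matched filter followed by noise whitening. First I would expand the Gaussian likelihood of the model (\ref{eq_1}),
\[
\|{\mathbf y}-{\mathbf H}{\mathbf G}{\mathbf x}\|^2 = \|{\mathbf y}\|^2 - 2\,\mathrm{Re}\big([({\mathbf H}{\mathbf G})^h{\mathbf y}]^h{\mathbf x}\big) + {\mathbf x}^h({\mathbf H}{\mathbf G})^h({\mathbf H}{\mathbf G}){\mathbf x},
\]
and observe by the Fisher--Neyman factorization that the dependence on ${\mathbf x}$ is carried entirely by the matched-filter output ${\mathbf z}=({\mathbf H}{\mathbf G})^h{\mathbf y}$ together with the data-independent term ${\mathbf x}^h{\mathbf W}{\mathbf x}$; hence ${\mathbf z}$ is a sufficient statistic and $I({\mathbf x};{\mathbf z})=I({\mathbf x};{\mathbf y})$. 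Recognizing $({\mathbf H}{\mathbf G})^h({\mathbf H}{\mathbf G})={\mathbf W}$, I would write ${\mathbf z}={\mathbf W}{\mathbf x}+{\mathbf n}'$ with $\mathrm{cov}({\mathbf n}')=\sigma^2{\mathbf W}$ and apply the invertible whitening map ${\mathbf W}^{-1/2}$, giving ${\mathbf W}^{-1/2}{\mathbf z}={\mathbf M}{\mathbf x}+\tilde{\mathbf n}$ with $\tilde{\mathbf n}$ circularly symmetric of covariance $\sigma^2{\mathbf I}$. As both the matched filter and the whitening are information-lossless, $I({\mathbf x};{\mathbf y})$ equals the mutual information of the white-noise channel with effective matrix ${\mathbf M}$, which is precisely the quantity (\ref{eq_PAPER}) returns when ${\mathbf M}$ is substituted for ${\mathbf H}{\mathbf G}$.

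It then remains to check the Gram identity, a short computation: since ${\mathbf M}={\mathbf V}_G{\boldsymbol \Sigma}_H{\boldsymbol \Sigma}_G{\mathbf V}_G^h$ is Hermitian positive semidefinite with ${\mathbf V}_G$ unitary, one has ${\mathbf M}^h{\mathbf M}={\mathbf M}^2={\mathbf V}_G{\boldsymbol \Sigma}_H^2{\boldsymbol \Sigma}_G^2{\mathbf V}_G^h={\mathbf W}$, confirming ${\mathbf M}={\mathbf W}^{1/2}$; and the virtual-channel matrix ${\boldsymbol \Sigma}_H{\boldsymbol \Sigma}_G{\mathbf V}_G^h$ of (\ref{eq_eq}) has the same Gram ${\mathbf W}$. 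Combined with the Gram-only dependence, this supplies a second, one-line route that bypasses the explicit whitening and directly equates the two values of $I({\mathbf x};{\mathbf y})$.

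The step I expect to be the main obstacle is the rank-deficient and non-square situation. When some diagonal entries of ${\boldsymbol \Sigma}_G$ vanish or $N_t\neq N_r$, the matrix ${\mathbf W}$ is singular, ${\mathbf W}^{-1/2}$ must be read as the Moore--Penrose inverse restricted to the range of ${\mathbf W}$, the whitening becomes a partial isometry rather than a bijection, and the received dimension of the reduced channel shifts from $N_r$ to $N_t$ (so the $N_r$ bookkeeping inside (\ref{eq_PAPER}) is relabelled without altering the total). I would then argue that the components of ${\mathbf z}$ lying in the null space of ${\mathbf W}$ are statistically independent of ${\mathbf x}$ and hence discardable without information loss. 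This is exactly the case the footnote handles by shrinking or zero-padding ${\boldsymbol \Sigma}_H$ so as to reduce to the square, full-rank setting, after which the clean bijective argument above applies verbatim.
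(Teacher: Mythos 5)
Your proof is correct in the paper's setting, but it follows a genuinely different route from the paper's own argument. The paper's proof is a one-step unitary rotation: starting from the virtual model (\ref{eq_eq}), it left-multiplies by the unitary ${\mathbf V}_G$, notes that ${\mathbf V}_G{\boldsymbol \Sigma}_H{\boldsymbol \Sigma}_G{\mathbf V}_G^h = {\mathbf M}$ and that ${\mathbf V}_G{\mathbf n}$ has the same circularly symmetric white statistics as ${\mathbf n}$, and concludes that the two models are equivalent channels with equal mutual information. You instead construct the equivalence: Fisher--Neyman sufficiency of the matched-filter output ${\mathbf z}=({\mathbf H}{\mathbf G})^h{\mathbf y}$, followed by whitening with ${\mathbf W}^{-1/2}$, yielding ${\mathbf M}{\mathbf x}+\tilde{\mathbf n}$ with $\tilde{\mathbf n}\sim{\cal N}_c({\mathbf 0},\sigma^2{\mathbf I})$. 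The two routes reconcile nicely: your composite map ${\mathbf W}^{-1/2}({\mathbf H}{\mathbf G})^h$ is itself unitary (applied to the virtual-channel matrix ${\boldsymbol \Sigma}_H{\boldsymbol \Sigma}_G{\mathbf V}_G^h$ it equals exactly the paper's ${\mathbf V}_G$), so you \emph{derive} the rotation the paper simply exhibits. What your approach buys: it proves the sufficient-statistic clause of the theorem literally rather than as a corollary, it makes self-contained the ``only a function of ${\mathbf W}$'' principle the paper imports from \cite{Xiao}, and it confronts the rank-deficient/non-square case explicitly (pseudo-inverse, discarding null-space components), which the paper relegates to a footnote. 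What the paper's route buys is brevity and no invertibility requirement on ${\mathbf W}$. Two caveats, both inherited from the paper rather than gaps of yours: the identity $({\mathbf H}{\mathbf G})^h({\mathbf H}{\mathbf G})={\mathbf W}$ is false for arbitrary ${\mathbf G}$ --- it requires the left singular vectors of ${\mathbf G}$ to coincide with the right singular vectors of ${\mathbf H}$, i.e., precisely the optimal-precoder parameterization implicit in (\ref{eq_eq}), which you tacitly acknowledge by checking the Gram identity against the virtual-channel matrix; and, like the paper, what is really proved is equality of the \emph{exact} mutual informations of the two models, since the finite-$L$ Gauss--Hermite values need not coincide exactly (the rotated noise evaluated at quadrature nodes is not itself a set of quadrature nodes), so both your claim and the paper's should be read as ``(\ref{eq_PAPER}) with ${\mathbf M}$ substituted approximates the same quantity.''
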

\begin{proof}
The proof of the theorem is simple. First, recall that the ``virtual'' channel model in (\ref{eq_eq}) is equivalent to the following model, which results by multiplying (\ref{eq_eq}) by the unitary matrix ${\mathbf V}_G$ on the left, resulting in
\begin{equation}
{\tilde {\mathbf y}}={\bf V}_G{\mathbf y} = {\bf V}_G{\boldsymbol \Sigma}_H {\boldsymbol \Sigma}_G {\bf V}_G^h {\mathbf x} + {\bf V}_G{\mathbf n}, \label{eq_eq_thm}
\end{equation}
where the modified noise term ${\bf V}_G{\mathbf n}$ has the same statistics with ${\mathbf n}$, because ${\bf V}_G$ is unitary. By applying the Gauss-Hermite approximation to (\ref{eq_eq_thm}), we see that we get the desired result, i.e., the value of $I({\mathbf x};{\mathbf y})$ remains the same, since both channel manifestations represent equivalent channels, i.e., the original one and its equivalent, thus their mutual information is the same. This completes the proof of the theorem.
\end{proof}
Note that using this theorem, an alternative proof of part of Theorem 1 in \cite{Xiao} can be developed, namely the fact that $I({\mathbf x};{\mathbf y})$ is only a function of ${\mathbf W}$, as ${\mathbf M}{\mathbf y}$ a sufficient statistic for ${\mathbf y}$ and ${\mathbf M}$ is a function of ${\mathbf W}$.

Assume without loss of generality that $N_t=N_r$.
The gradient of $I({\mathbf x};{\mathbf y})$ with respect to ${\mathbf M}$ can be found (see Appendix B for the derivation) from the Gauss-Hermite expression presented in (\ref{eq_PAPER}) as follows
\begin{equation}
\begin{split}
\nabla_{\mathbf M}I = &-\frac{1}{\log(2)} \frac{1}{M^{N_t}}\left(\frac{1}{\pi}\right)^{N_r}\sum_{k_{r1}=1}^{L} \sum_{k_{i1}=1}^{L}\cdots \sum_{k_{rN_r}=1}^{L} \sum_{k_{iN_r}=1}^{L} c(k_{r1})c(k_{i1})\cdots c(k_{rN_r})\\
 & \times c(k_{iN_r}){ {\mathbf R}}(\sigma v_{{k_{r1}}}, \sigma v_{{k_{i1}}},\cdots,\sigma v_{{k_{rN_r}}}, \sigma v_{{k_{iN_r}}}), \label{eq_grad_M}
\end{split}
\end{equation}
where ${ {\mathbf R}}(\sigma v_{{k_{r1}}}, \sigma v_{{k_{i1}}},\cdots,\sigma v_{{k_{rN_r}}}, \sigma v_{{k_{iN_r}}})$ is the value of the $N_t\times N_t$ matrix
\begin{equation}
\begin{split}
&\sum_{k}\frac{1}{\sum_{l}\exp (-\frac{1}{\sigma^2}||{\mathbf n}-{\mathbf M}({\mathbf x}_k-{\mathbf x}_l)||^2)}
\sum_{m}\left(\exp (-\frac{1}{\sigma^2}||{\mathbf n}-{\mathbf M}({\mathbf x}_k-{\mathbf x}_m)||^2)\right.\\
&\times\left. \left( ({\mathbf n}-{\mathbf M}({\mathbf x}_k-{\mathbf x}_m))({\mathbf x}_k-{\mathbf x}_m)^h + (({\mathbf n}-{\mathbf M}({\mathbf x}_k-{\mathbf x}_m))({\mathbf x}_k-{\mathbf x}_m)^h)^h \right) \vphantom{\exp (-\frac{1}{\sigma^2}||{\mathbf n}-{\mathbf M}({\mathbf x}_k-{\mathbf x}_m)||^2)}\right)
\end{split}
\end{equation}
evaluated at ${\mathbf n}_e =\sigma{\mathbf v}(\{k_{rv},~k_{iv}\}_{v=1}^{N_r})$.

The required $\nabla_{\mathbf W}I$ for the execution of the optimization process can be found from Appendix B as per the next lemma, using an easily proven equation.
Using the fact that for a Hermitian matrix such as ${\mathbf M}$, we need to add the Hermitian of the differential above in order to evaluate the actual gradient (see \cite{diff}), we get the desired result as follows (see Appendix B).
\begin{lemma}
For the MIMO channel model presented in (\ref{eq_1}), the Gauss-Hermite approximation allows to approximate $\nabla_{\mathbf W}I$ as follows.
\begin{equation}
\nabla_{\mathbf W}I \approx \mathrm{reshape}\left((\mathrm{\mathrm{vec}}(\nabla_{\mathbf M}I)^t \left(({\mathbf M}^*)\otimes{\mathbf I} + {\mathbf I}\otimes {\mathbf M}
 \right)^{-1}),N_t,N_t\right),
\end{equation}
where $\mathrm{reshape}({\mathbf A},k,n)$ is the standard reshape of a matrix ${\mathbf A}$ (with total number of elements $kn$) to a matrix with $k$ rows, $n$ columns, and
where $\otimes$ denotes Kronecker product of matrices. Standard \rm{reshape} emanates from the vector $\mathrm{vec}({\mathbf A})$ of matrix ${\mathbf A}$ which encompasses all columns of ${\mathbf A}$ starting from the leftmost one to the rightmost.
\end{lemma}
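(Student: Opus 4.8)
The plan is to derive $\nabla_{\mathbf W}I$ from the already-established $\nabla_{\mathbf M}I$ of (\ref{eq_grad_M}) by a chain-rule argument anchored on the defining relation $\mathbf W=\mathbf M^2$, where $\mathbf M=\mathbf W^{1/2}={\mathbf V}_G {\boldsymbol \Sigma}_H {\boldsymbol \Sigma}_G {\mathbf V}_G^h$ is Hermitian. First I would take the matrix differential of this relation. Because $\mathbf M$ does not in general commute with its own increment, the product rule yields the Sylvester-type identity
\[
d\mathbf W=(d\mathbf M)\,\mathbf M+\mathbf M\,(d\mathbf M),
\]
which is the linear map relating the perturbations $d\mathbf M$ and $d\mathbf W$ that I will invert.

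Next I would vectorize this identity. Applying the standard rule $\mathrm{vec}(\mathbf A\mathbf B\mathbf C)=(\mathbf C^t\otimes\mathbf A)\,\mathrm{vec}(\mathbf B)$ to each term, with $\mathbf A$ or $\mathbf C$ taken to be the identity as appropriate, gives
\[
\mathrm{vec}(d\mathbf W)=\bigl(\mathbf M^t\otimes\mathbf I+\mathbf I\otimes\mathbf M\bigr)\,\mathrm{vec}(d\mathbf M).
\]
Since $\mathbf M$ is Hermitian we have $\mathbf M^t=\mathbf M^*$, so the Kronecker-sum operator is exactly $\mathbf M^*\otimes\mathbf I+\mathbf I\otimes\mathbf M$, the operator appearing in the statement. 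I would then invert it to write $\mathrm{vec}(d\mathbf M)$ in terms of $\mathrm{vec}(d\mathbf W)$; invertibility holds because the eigenvalues of a Kronecker sum are the pairwise sums of the eigenvalues of $\mathbf M$, which are the nonnegative products of singular values, so the operator is singular only in the degenerate case of two vanishing entries, excluded here by the trace power constraint (otherwise a pseudoinverse is used).

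Finally I would close the argument at the level of the scalar differential of $I$. Writing the first-order change in the two equivalent coordinates, $dI=\mathrm{vec}(\nabla_{\mathbf M}I)^t\,\mathrm{vec}(d\mathbf M)=\mathrm{vec}(\nabla_{\mathbf W}I)^t\,\mathrm{vec}(d\mathbf W)$, and substituting $\mathrm{vec}(d\mathbf M)=(\mathbf M^*\otimes\mathbf I+\mathbf I\otimes\mathbf M)^{-1}\mathrm{vec}(d\mathbf W)$, I would match the coefficients of the arbitrary increment $\mathrm{vec}(d\mathbf W)$ to read off $\mathrm{vec}(\nabla_{\mathbf W}I)^t=\mathrm{vec}(\nabla_{\mathbf M}I)^t(\mathbf M^*\otimes\mathbf I+\mathbf I\otimes\mathbf M)^{-1}$. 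Reshaping this length-$N_t^2$ row vector column-by-column into an $N_t\times N_t$ matrix reproduces the claimed expression.

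I expect the main obstacle to be bookkeeping around the complex/Hermitian gradient convention rather than the Kronecker algebra. One must keep the differential form of the real-valued $I$ consistent throughout, in particular relying on the Hermitian symmetrization already built into $\nabla_{\mathbf M}I$ (the matrix $\mathbf R$ in (\ref{eq_grad_M}) has the form $\mathbf A+\mathbf A^h$, which is the ``add the Hermitian of the differential'' step noted before the lemma), so that $\nabla_{\mathbf M}I$ is a genuine gradient of $I$. One must also verify that the $\mathrm{reshape}$ of a row vector is read with the same column-major ordering as $\mathrm{vec}$, so that the transposition in $\mathrm{vec}(\nabla_{\mathbf M}I)^t(\cdot)^{-1}$ does not silently permute the matrix entries; checking this index consistency is the only genuinely delicate point in the derivation.
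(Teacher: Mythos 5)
Your proposal is correct and follows essentially the same route as the paper's Appendix~B: differentiate ${\mathbf W}={\mathbf M}^2$ to get the Sylvester-type identity, vectorize it into the Kronecker-sum relation $d\mathrm{vec}({\mathbf W}) = \left(({\mathbf M}^*)\otimes{\mathbf I} + {\mathbf I}\otimes {\mathbf M}\right)d\mathrm{vec}({\mathbf M})$, and then convert $\nabla_{\mathbf M}I$ into $\nabla_{\mathbf W}I$ via the derivative-as-row-vector/reshape convention; your write-up actually supplies the vec-identity and chain-rule details that the paper leaves to citations of the matrix-calculus reference. The only inaccuracy is your invertibility aside: the Kronecker sum is singular as soon as ${\mathbf M}$ has a single zero eigenvalue (take $i=j$ in the pairwise sums), and the trace constraint does not exclude this, but since the paper itself silently assumes invertibility and you hedge with a pseudoinverse, this does not affect the argument.
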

Then, as $I({\mathbf x};{\mathbf y})$ is a concave function of ${\mathbf W}$ \cite{Xiao3}, we can maximize over ${\mathbf W}$ in a straightforward way using closed form expressions. This is based on the fact that the approximated $I({\mathbf x};{\mathbf y})$ through the Gauss-Hermite approximation is very accurate, as shown in the next section.

The utilization of this lemma goes beyond the exploitation of the gradient of the mutual information in the precoder optimization algorithm. As it is well known from \cite{Xiao}$, \nabla_{\mathbf W}I = {\boldsymbol \Phi}$, where ${\boldsymbol \Phi} = {\mathbb E}\left\{ {\mathbb E}\{ ({\mathbf x}-{\mathbb E}({\mathbf x}|{\mathbf y}))({\mathbf x}-{\mathbb E}({\mathbf x}|{\mathbf y}))^h|{\mathbf y}\}\right\}$, is the minimum mean square error (MMSE) covariance matrix of the channel. Thus, by using the current lemma, an accurate estimate of the MMSE covariance matrix of the MIMO channel can be achieved. This is very useful especially when dealing with, e.g., SCSI cases where, the MMSE covariance matrix becomes instrumental in deriving the asymptotically optimal precoder \cite{SCSI}. Thus, based on the proposed Gauss-Hermite approximation, accurate, but otherwise simplified derivation of the MMSE covariance matrix of the channel becomes possible.

Finally, since from \cite{Xiao2} we have that $\nabla_{{\boldsymbol \Sigma}_G^2}I = \mathrm{diag}({\mathbf V}_G^h  \nabla_{\mathbf W}I {\mathbf V}_G{\boldsymbol \Sigma}_H^2),$ we can easily evaluate it through the procedure presented above.
\subsection{Complexity of the Globally Optimal Precoder Determination Method}
The complexity of the globally optimal precoder determination method depends mainly on the evaluations of $I({\mathbf x};{\mathbf y})$ as required by the optimization algorithm described above. By employing the Gauss-Hermite approximation described herein it becomes possible to significantly accelerate the evaluations of $I({\mathbf x};{\mathbf y})$ at each iteration of the optimization algorithm. However, the complexity is still high: Each evaluation of $I({\mathbf x};{\mathbf y})$ through the Gauss-Hermite approximation requires, as per the development in Appendix C, employing $L$ weights and $L$ nodes per each real and imaginary component of the noise vector ${\mathbf n}$, resulting in $2N_tL$ total weight and node dimensions. Then, evaluation of $f_k$ in (\ref{eq_f}) requires $2N_r$ nested ``DO'' loops, each of length $L$, resulting in $L^{2N_r}$ memory parameter values overall. For a good approximation in Gauss-Hermite quadrature, a value of $L \geq 3$ is required (please see the next section for relevant results) and thus the overall memory requirement becomes $3^{2N_r}$. As far as the computational complexity in the number of operations (including both (complex) summations and multiplications) involved in calculating $I({\mathbf x};{\mathbf y})$ through the Gauss-Hermite approximation, the corresponding complexity is shown in Appendix B to be $M^{N_t}(2L-1)^{(2N_r)} L^4 N_r (2N_t+N_r-1)$. Thus, for example, going from $M=16$ to $M=32$ QAM will result in about 4 times higher complexity with $N_t$ and $N_r$ held constant. On the other hand, increasing $N_r$ has an even more profound effect on the complexity, due to its more complicated presence in this complexity equation. For example, increasing $N_r$ from 2 to 3 while keeping all other parameters constant, will increase the complexity by a factor of $(2L-1)^2$ or for $L=3$ by 25 times. Our systematic numerical evaluations corroborate these numbers very closely. As we observe by comparing the figures in Table 1 to the numbers presented in \cite{Lozano}, we see that the Gauss-Hermite approach is about 7 times faster per iteration of the precoder for $M=16$ and PGP with groups of size $2\times2$, i.e., the presented approach is faster. This is one of the major improvements due to the proposed methodology.

\begin{table}\small
\caption{Computational Complexity Parameters of Globally Optimal Linear Precoder with QAM and Gauss-Hermit Quadrature}
\setcounter{table}{0}
\centering
\begin{tabular}{| c | c | |c| |c| |c|}
\hline \hline
$ $ &  &   & $I({\mathbf x};{\mathbf y})$  & ${\boldsymbol \nabla_{\mathbf W}I}$ \\
& ${\boldsymbol Complexity}$&${\boldsymbol Memory}$ & ${\boldsymbol Single~Run}$ & ${\boldsymbol Run}$\\
& &${\boldsymbol Requirement}$ & ${\boldsymbol CPU~sec}$ &${\boldsymbol CPU~sec}$\\
\hline
\bf{$M=16$} &$16^{N_t}(2L-1)^4 L^4 N_r (2N_t+N_r-1)$ &$L^{2N_r}$ &.25 & .54 \\
\hline
\bf{$M=32$} &$32^{N_t}(2L-1)^4 L^4 N_r (2N_t+N_r-1)$ &$L^{2N_r}$& 5.4&12.6   \\
\hline
\bf{$M=64$} &$64^{N_t}(2L-1)^4 L^4 N_r (2N_t+N_r-1)$ &$L^{2N_r}$& 50.20&110.60   \\
\hline
\end{tabular}
\end{table}

\section{{NUMERICAL RESULTS}}
The results presented in this subsection employ QAM with $16, ~32$, or $64$ constellation sizes. We employ MIMO systems with $N_t=N_r=2$ when global precoding optimization is performed. We have used an $L=3$ Gauss-Hermite approximation which results in $3^{2N_r}$ total nodes due to MIMO. The implementation of the globally optimizing methodology is performed by employing two
backtracking line searches, one for ${\mathbf W}$ and another one for ${\boldsymbol \Sigma}_G ^2$ at each iteration, in a fashion similar to \cite{Xiao3}. For the results presented, it is worth mentioning that only a few iterations (e.g., typically $<8$) are required to converge to the optimal solution results as presented in this paper. We apply the complexity reducing method of PGP \cite{TE_TWC} which offers semi-optimal results under exponentially lower transmitter and receiver complexity \cite{TE_TWC}. PGP divides the transmitting and receiving antennas into independent groups, thus achieving a much simpler detector structure while the precoder search is also dramatically reduced as well. Finally, we address both the CSIT and SCSI cases.

 We divide this section into five subsections. In the first subsection, we examine the accuracy of the proposed Gauss-Hermite approximation and provide a comparison with the lower bound technique presented in \cite{Xiao2}. In the second subsection we show results for the globally optimal precoder for $M=16,~32$ based on the approximation in conjunction with independent Gaussian channels and CSIT. In the third subsection we present results for SCSI channels similar to the ones in 3GPP SCM \cite{SCM}, with antenna size up to $100\times 100$ with PGP and modulation size $M=16,~32$. In the fourth subsection, we present results for CSIT jointly with PGP and higher size of antennas and modulation. Finally, in the last subsection, we present results for a MIMO system with $100$ base station antennas and $4$ user antennas with $M=16,~64$.
\subsection{Accuracy of the Gauss-Hermite approximation technique}
In Fig. 1 we present results for $I$ through simulation, the Gauss-Hermite approximation (GH) with $L=3$, and the lower bound developed in \cite{Xiao2} versus the signal-to-noise ratio per bit ($SNR_b$) in dB, for QAM with $M=16$, and for the commonly used channel \cite{Giann_max_div,Xiao}, $${\mathbf H}_1 = \left[ \begin{array} { c c}
2 & 1  \\
1 & 1
\end{array}
\right].$$
In the same figure, we also show a lower bound for $I({\mathbf x};{\mathbf y})$ which appeared in \cite{Xiao2}. We see excellent accuracy for the approximation, i.e., no observable difference between the Gauss-Hermite approximation and the simulations, over all $SNR_b$ values.
\begin{figure}[h]
\centering
\setcounter{figure}{0}
\includegraphics[height=2.4in,width=3.5in]{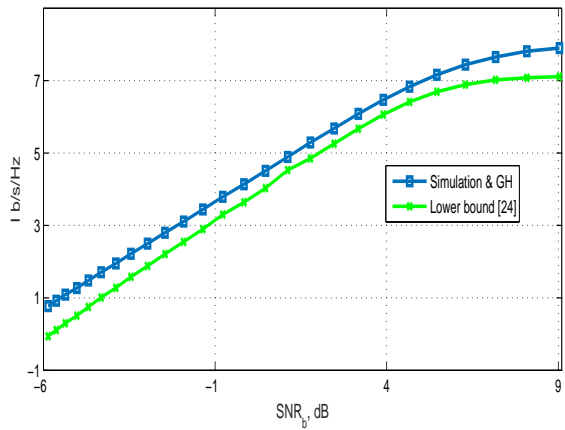}
	\caption{Results for $I({\mathbf x};{\mathbf y})$ without precoding for the ${\mathbf H}_1$ channel and QAM $M=16$ modulation.}
\end{figure}
In Fig. 2 we present the corresponding results for $M=32$ QAM in conjunction with the randomly generated channel
$${\mathbf H}_2 = \left[ \begin{array} { c c}
1.98+j0.12 & 0.0124-j0.0016  \\
-0.2487-j0.0314 & 0.0992-j0.1
\end{array}
\right],$$
with the same type of behavior as before, i.e., the Gauss-Hermite approximation offers excellent accuracy and that the lower bound is lagging behind in performance, albeit by less than $N_r(1/\log(2)-1)\approx 0.88,$ which is the shift introduced in \cite{Xiao2} in order to approximate $I$ very closely for QPSK modulation.
\begin{figure}[h]
\centering
\setcounter{figure}{1}
\includegraphics[height=2.4in,width=3.5in]{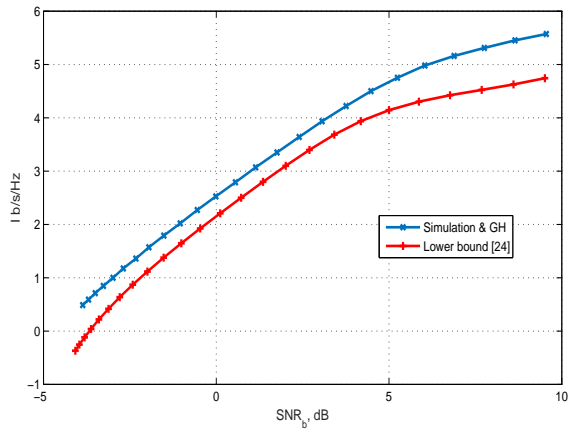}
	\caption{Results for $I({\mathbf x};{\mathbf y})$ without precoding for the ${\mathbf H}_2$ channel and QAM $M=32$ modulation.}
\end{figure}
\subsection{Results for Globally Optimal Precoding for Gaussian Channels and CSIT}
In Fig. 3 we show results for the globally optimal precoder based on the Gauss-Hermite approximation with $L=3$, and the Maximum Diversity Precoder (MDP) presented in  \cite{Giann_max_div}, for $M=16$, for ${\mathbf H}_1$. We observe that the optimal precoder offers significant utilization gains in the low SNR region, while its gain diminishes in the higher SNR region. For example, at $SNR_b=-4~dB,$ there is about $0.6~b/s/Hz$ gain attainable with the globally optimal precoder over its MDP and no-precoding counterparts, which represents a $30\%$ gain. Also, contrary to the BPSK/QPSK modulation case presented in \cite{Xiao,Xiao2}, the MDP precoder offers no significant gain over the no-precoding case, a very important difference. 
\begin{figure}[h]
\centering
\setcounter{figure}{2}
\includegraphics[height=2.4in,width=3.5in]{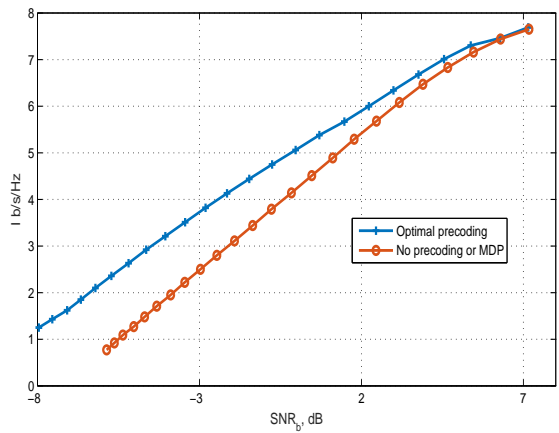}
	\caption{$I({\mathbf x};{\mathbf y})$ results for optimal precoding, MDP, and no-precoding cases for a $2\times 2$ MIMO system and QAM $M=16$ modulation.}
\end{figure}

In Fig. 4 we present results for $I({\mathbf x};{\mathbf y})$, for the same channel, MDP, and no precoding, $N_t=N_r=2$, and $M=32$.
There is no noticeable improvement offered by MDP over the no-precoding case, similarly to the $M=16$ QAM case presented above. Clearly, the same fundamental conclusions as in the $M=16$ case hold true. The offered gain of the globally optimal precoder in low SNR is still around $50\%$ over its no- precoding and MDP counterparts. 
\begin{figure}[h]
\centering
\setcounter{figure}{3}
\includegraphics[height=2.4in,width=3.5in]{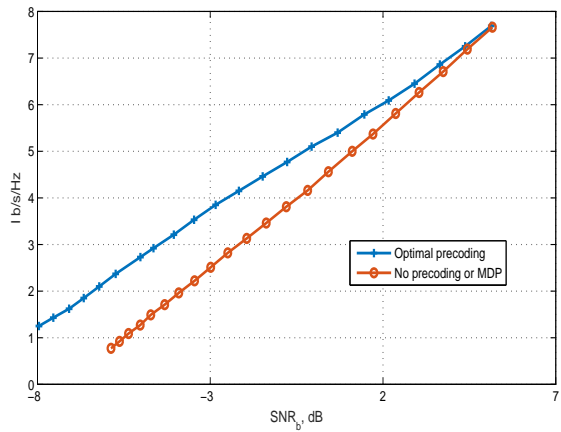}
	\caption{$I({\mathbf x};{\mathbf y})$ results for optimal precoding, MDP, and no-precoding cases for a $2\times 2$ ${\mathbf H}_1$ MIMO system and QAM $M=32$ modulation.}
\end{figure}
We next present same type of results for ${\mathbf H}_2$. In Fig. 5 we observe that for this type of channel, the gains achieved by the globally optimal precoder are significantly higher and that in the high $SNR$ regime the no-precoding case cannot achieve the maximum mutual information given by $N_t\log_2(M)=10$, i.e., it becomes saturated. We will see that for certain channels, which we call saturated channels, this type of behavior is also observed for the large MIMO channel configurations in both CSIT and SCSI channel cases.
\begin{figure}[h]
\centering
\setcounter{figure}{4}
\includegraphics[height=2.4in,width=3.5in]{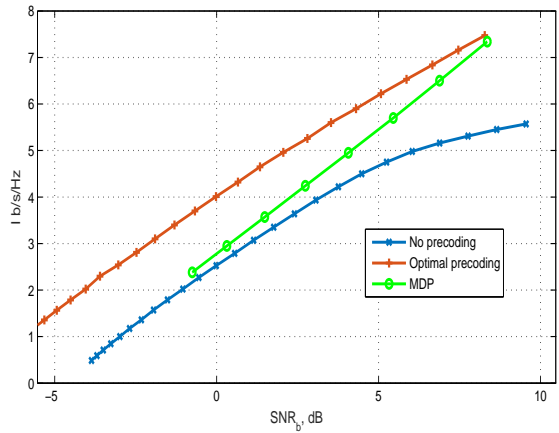}
	\caption{$I({\mathbf x};{\mathbf y})$ results for optimal precoding, MDP, and no-precoding cases for a $2\times 2$ ${\mathbf H}_2$ MIMO system and QAM $M=32$ modulation.}
\end{figure}

\subsection{Results for SCSI in Conjunction with PGP}
We first consider an $N_t=N_r=20$ SCM urban channel with $M=32$. We create the large size asymptotic approximation no-precoding results for a SCSI channel based on \cite{SCSI}. For precoding to be efficient, but otherwise realistic, we employ PGP \cite{TE_TWC} with 10 groups of size $2\times 2$ each. Due to the nature of this channel, we observe in Fig. 6 that the no-precoding case saturates at $I({\mathbf x};{\mathbf y})=70~b/s/Hz$. We see very significant gains offered by PGP in the high $SNR_b$ regime. The PGP system attains the full capacity available in high $SNR_b$.
\begin{figure}[h]
\centering
\setcounter{figure}{5}
\includegraphics[height=2.4in,width=3.5in]{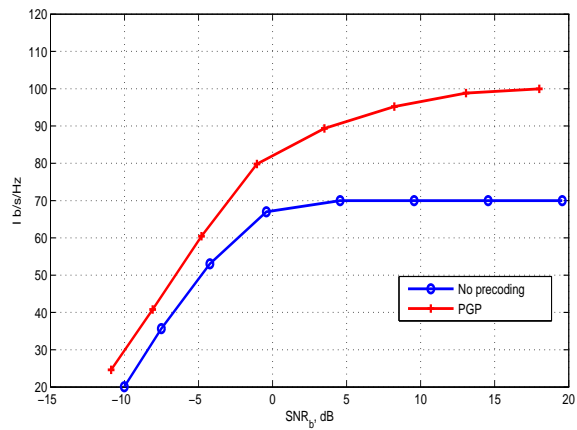}
	\caption{$I({\mathbf x};{\mathbf y})$ results for PGP and no-precoding cases for a $20\times 20$ ${\mathbf H}$ SCSI MIMO system and QAM $M=32$ modulation.}
\end{figure}
In Fig. 7 we present results for PGP versus a no-precoding SCM channel with $N_t=N_r=100$ and $M=16$. To the best of our knowledge, results for such large MIMO configurations are not available in the literature.
\begin{figure}[h]
\centering
\setcounter{figure}{6}
\includegraphics[height=2.4in,width=3.5in]{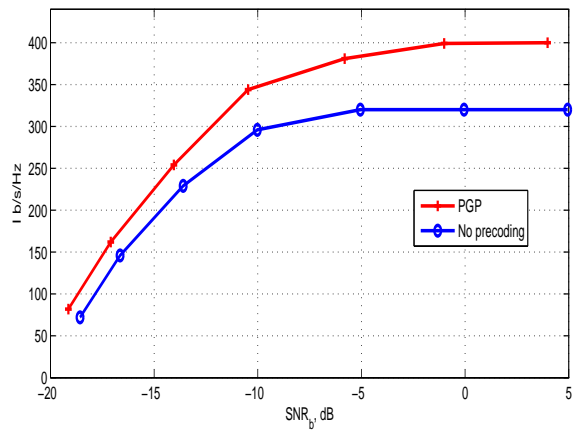}
	\caption{$I({\mathbf x};{\mathbf y})$ results for PGP and no-precoding cases for a $100\times 100$ ${\mathbf H}$ SCSI MIMO system and QAM $M=16$ modulation.}
\end{figure}
Similar to the previous results, we observe high information rate gains in the high $SNR_b$ regime as the PGP system achieves the full capacity of $400~b/s/Hz$ while the no-precoding scheme saturates at $320~b/s/Hz$. The PGP system employed uses $50$ groups of size $2\times 2$ each.
In Fig. 8 we present results for PGP versus a no-precoding SCM channel with $N_t=N_r=32$ and $M=16$, using the same SCM channel as in \cite{Lozano}. We observe that our proposed approach offers slightly better performance than the one employed in \cite{Lozano}, although it is faster, as it is explained in Table 1.
\begin{figure}[h]
\centering
\setcounter{figure}{7}
\includegraphics[height=2.4in,width=3.5in]{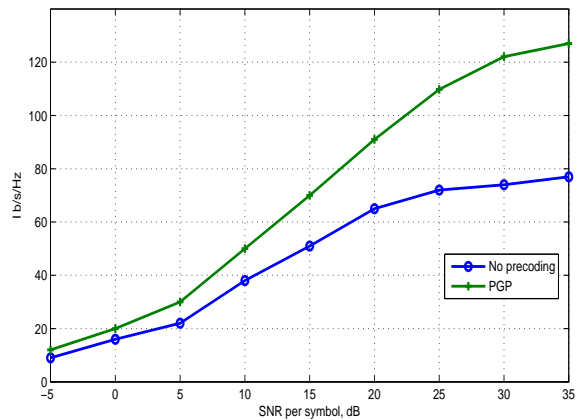}
	\caption{$I({\mathbf x};{\mathbf y})$ results for PGP and no-precoding cases for the $32\times 32$ ${\mathbf H}$ SCSI MIMO channel employed in \cite{Lozano} in conjunction with QAM $M=16$ modulation.}
\end{figure}
\subsection{Results for CSIT in Conjunction with  PGP}
For a $4\times 4$ channel
$${\mathbf H} = \left[ \begin{array} { c c c c}
 -1.5362 + 0.3151i &  0.5714 + 0.9123i  & 0.1394 - 0.3407i & -0.0085 + 0.0081i\\
  -1.5571 + 1.0171i & -0.3071 + 0.3765i & -0.3073 + 0.5680i & -0.0035 + 0.0041i\\
   0.4550 - 0.2484i &  0.7266 - 1.2195i &  0.0780 + 0.1645i & -0.0131 + 0.0008i\\
  -0.2278 + 3.1243i & -0.6890 - 0.3397i &  0.0175 - 0.2322i & -0.0045 - 0.0064i
   \end{array}
\right],$$
used with $M=64$ we get the no-precoding and the PGP results using 2 groups of size $2\times 2$ each depicted in Fig. 9. This example represents the corresponding CSIT case example that is similar to the SCM channels used in the previous subsection.
\begin{figure}[h]
\centering
\setcounter{figure}{8}
\includegraphics[height=2.4in,width=3.5in]{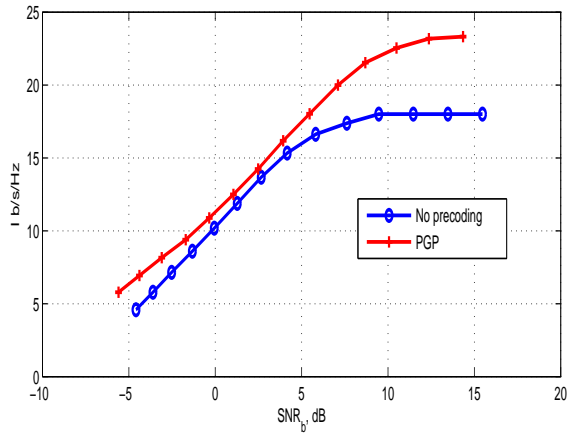}
	\caption{$I({\mathbf x};{\mathbf y})$ results for PGP and no-precoding cases for a $4\times 4$ ${\mathbf H}$ CSIT MIMO system and QAM $M=64$ modulation.}
\end{figure}
We observe very high gains of PGP over the no-precoding case in the high $SNR_b$ regime. To the best of our knowledge, this type of results for optimal precoding in conjunction with $M=64$ are not available in the literature.
Finally, in Fig. 10 we present results for an asymmetric randomly generated MIMO channel with $N_t=4,~N_r=10$, and $M=16$. PGP employs two groups of size $N_r=5,~N_t=2$ each. In the current scenario, we observe that significant gains are shown in the low $SNR_b$ regime, e.g., around $3~dB$ in $SNR_b$ lower than $-7~dB$.
\begin{figure}[h]
\centering
\setcounter{figure}{9}
\includegraphics[height=2.4in,width=3.5in]{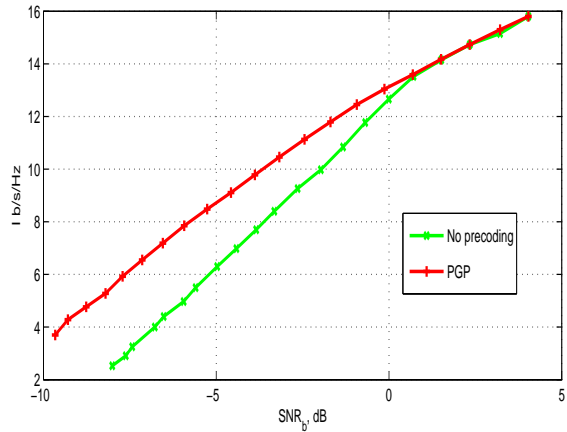}
	\caption{$I({\mathbf x};{\mathbf y})$ results for PGP and no-precoding cases for a randomly generated $10\times 4$ ${\mathbf H}$ CSIT MIMO system and QAM $M=16$ modulation.}
\end{figure}
\subsection{Results for Massive MIMO}
Massive MIMO \cite{Marzetta1, Marzetta2, Marzetta3} has attracted much interest recently, due to its potential to offer high data rates. We present results for the uplink, and downlink of a Massive MIMO system based on $100$ base station, $4$ user antennas, respectively, with $M=16,~64$, and for a Kronecker-based 3GPP SCM urban channel in a CSIT scenario in a single user configuration. Fig. 11 shows results for the $4\times100$ uplink of the system. We employ PGP to dramatically reduce the system complexity at the transmitter and receiver sites.
\begin{figure}[h]
\centering
\setcounter{figure}{10}
\includegraphics[height=2.4in,width=3.5in]{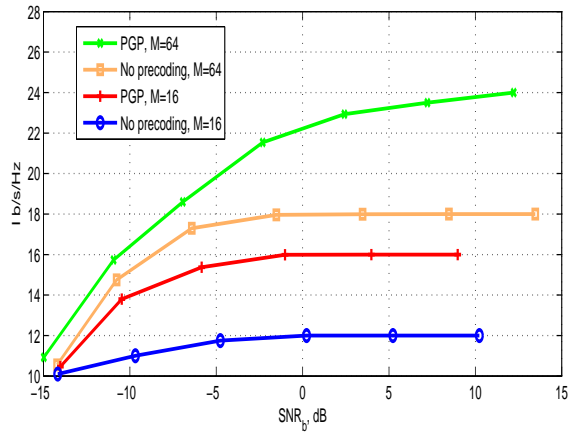}
	\caption{$I({\mathbf x};{\mathbf y})$ results for PGP and no-precoding cases for a randomly generated $100\times 4$ uplink ${\mathbf H}$ CSIT MIMO system and QAM $M=16$ modulation.}
\end{figure}
Under no precoding, the channel saturates and fails to meet the maximum possible mutual information of $16~b/s/Hz$, while with PGP the system clearly achieves the maximum mutual information rate, thus achieving high gains on the uplink in the high SNR regime. We stress the much higher throughput possible with $M=64$ over the $M=16$ case. For example, the no-precoding $M=16$ uplink significantly outperforms the PGP $M=16$ uplink. Second, the PGP $M=64$ uplink offers further gains by, e.g., achieving the maximum possible rate of $24~b/s/Hz$.
For the downlink, in Fig. 12 we show results where the no-precoding case operates under $100$ antenna inputs all correlated through the right eigenvectors of the channel, thus creating a very demanding environment at the user, due to the exponentially increasing maximum a posteriori (MAP) detector complexity \cite{TE_TWC}. On the other hand, employing PGP with only two input symbols per receiving antenna, i.e., with dramatically reduced decoding complexity, the PGP system achieves much higher throughput in the lower SNR regime, with SNR gain on the order of $10~dB$, albeit achieving a maximum of $32, ~48~b/s/Hz$ as there are a total of $8$ $M=16,~64$ QAM data symbols employed, respectively. We observe the superior performance of $M=64$ over its $M=16$ counterpart due to its increased constellation size. For example, at medium $SNR_b$, e.g., $SNR_b =4$, the $M=64$ PGP scheme achieves $45\%$ higher throughput that the $M=16$ one, a significant improvement. We would also like to emphasize that the no-precoding scheme requires a very high exponential MAP detector complexity, on the order of $M^{100}$, while for the low-SNR-superior PGP, this complexity is on the order of $M^2$ only. Thus, even in the higher SNR region where the no-precoding scheme can achieve a higher throughput, the complexity required at the user site becomes prohibitive. This demonstrates the superiority of PGP on the Massive MIMO downlink. On the other hand, in lower SNR, the PGP scheme achieves both much higher throughput with simultaneously exponentially lower MAP detector complexity at the user site detector. Note that the performance provided by PGP on the downlink depends
on the number of symbols processed jointly. This number is currently
limited by the computational complexity available. This limitation
does not occur on the uplink since in that case $N_t < N_r$.
\begin{figure}[h]
\centering
\setcounter{figure}{11}
\includegraphics[height=2.4in,width=3.5in]{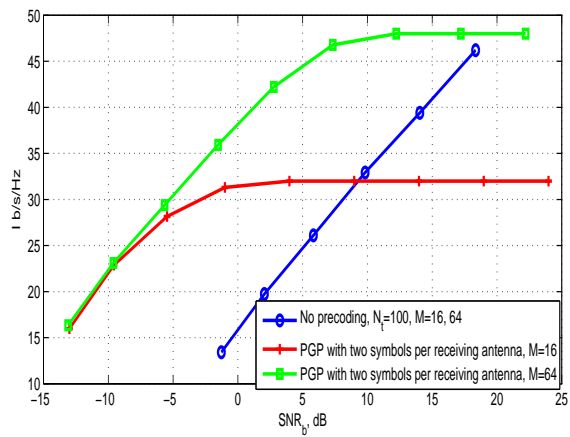}
	\caption{$I({\mathbf x};{\mathbf y})$ results for PGP and no-precoding cases for a randomly generated $4\times 100$ ${\mathbf H}$ downlink CSIT MIMO system and QAM $M=16$ modulation.}
\end{figure}
\section{{CONCLUSIONS}}
In this paper, the problem of designing a linear precoder for MIMO systems toward mutual information maximization is addressed for QAM with $M\geq 16$ and in conjunction with large MIMO system size. A major obstacle toward this goal is a lack of efficient techniques for evaluating
$I({\mathbf x},{\mathbf y})$ and its derivatives. We have presented a novel solution to this problem based on the Gauss-Hermite quadrature. We then applied a global optimization
framework to derive the globally optimal precoder for the case of QAM with $M=16$ and $32$ and small antenna size configurations. We showed that under CSIT in this case, significant gains are available for the lower SNR range over no precoding, or MDP. We showed that for the standard $2\times2$ channel, although the globally optimal precoder offers significant gains over MDP and the no-precoder configurations in the low SNR region, it fails to offer gains as high as $1~b/s/Hz$. However, we demonstrated that by employing
another $2\times2$ channel gains as high as $1.4~b/s/Hz$ are possible. For systems of large MIMO configurations, we applied the complexity-simplifying PGP concept \cite{TE_TWC} to derive semi-optimal precoding results. Under SCSI, we showed that for urban 3GPP SCM channels, an interesting saturation effect in the no-precoding case takes place, while the semi-optimal PGP precoder offers dramatically better results in this case while it does not experience any saturation as the SNR increases, e.g., it achieves the maximum information rate,
$I({\mathbf x};{\mathbf y}) = N_t\log_2(M)$ at high SNR. Furthermore, we applied the same Gauss-Hermite approximation approach to CSIT with a large number of antenna with the same success. We showed that for specific type of channels similar to urban 3GPP SCM \cite{SCM}, the PGP approach offers very high gains over the no-precoding case in the high $SNR_b$ regime. Finally, we considered a Massive MIMO scenario in conjunction with CSIT and showed that by carefully designing the downlink and uplink precoders, the methodology shows very high gains, especially on the downlink, although it employs an exponentially simpler MAP detector at the user site.

Based on the evidence presented, the novel application of the Gauss-Hermite quadrature rule in the MIMO scenario allows for generalizing the interesting results presented in \cite{Xiao,SCSI} to the QAM case with ease. Because of the simplification achieved by the combination of PGP and the Gauss-Hermite approximation, we were able to derive results with, e.g., $N_t=N_r=100$ as well as with $M=64$ efficiently. In addition, the presented Gauss-Hermite approximation offers important simplification in the evaluation of the MMSE covariance matrix of the MIMO channel which is required in, among other areas, the SCSI equivalent channel determination \cite{SCSI}.

\appendices
\section{Gauss-Hermite Quadrature Approximation in MIMO Input Output Mutual Information}
$I({\mathbf x};{\mathbf y}) = H({\mathbf x})-H({\mathbf x}|{\mathbf y})= N_t\log_2(M)-H({\mathbf x}|{\mathbf y})$, where the conditional entropy, $H({\mathbf x}|{\mathbf y})$ can be written as \cite{Xiao}
\begin{equation}\small
\begin{split}
H({\mathbf x}|{\mathbf y})  &= \frac{N_r}{\log(2)} +\frac{1}{M^{N_t}}\sum_k{\mathbb E}_{\mathbf n}\left( \log_2\left(\sum_{m}\exp (-\frac{1}{\sigma^2}||{\mathbf n}-{\mathbf H}{\mathbf G}({\mathbf x}_k-{\mathbf x}_m)||^2)\right) \right)\\
&=\frac{N_r}{\log(2)} +\frac{1}{M^{N_t}}\sum_{k}\int_{-\infty}^{+\infty} {\cal N}_c({\mathbf n}|{\mathbf 0},\sigma^2 {\mathbf I})
\log_2\left(\sum_{m}\exp (-\frac{1}{\sigma^2}||{\mathbf n}-{\mathbf H}{\mathbf G}({\mathbf x}_k-{\mathbf x}_m)||^2)\right) d{\mathbf n},
\label{eq_first}
\end{split}
\end{equation}
where ${\cal N}_c({\mathbf n}|{\mathbf 0},\sigma^2 {\mathbf I})$ represents the probability density function (pdf) of the circularly symmetric complex random vector due to AWGN. Let us define
\begin{equation}
\begin{split}
f_k \doteq \int_{-\infty}^{+\infty} {\cal N}_c({\mathbf n}|{\mathbf 0},\sigma^2 {\mathbf I})
\log_2\left(\sum_{{\mathbf x}_m}\exp (-\frac{1}{\sigma^2}||{\mathbf n}-{\mathbf H}{\mathbf G}({\mathbf x}_k-{\mathbf x}_m)||^2)\right) d{\mathbf n} \label{eq_special}.
\end{split}
\end{equation}
Since ${\mathbf n}$ has independent components over the different receiving antennas, and over the real and imaginary dimensions, the integral above can be partitioned into $2N_r$ real integrals in tandem, in the following manner: Define by $n_{rv}, n_{iv}$, with $v=1,\cdots,N_r$, the $v$th receiving antenna real and imaginary noise component, respectively. Also define by $({\mathbf H}{\mathbf G}({\mathbf x}_k-{\mathbf x}_m))_{rv}$ and $({\mathbf H}{\mathbf G}({\mathbf x}_k-{\mathbf x}_m))_{iv}$, the $v$th receiving antenna real and imaginary component of $({\mathbf H}{\mathbf G}({\mathbf x}_k-{\mathbf x}_m))$, respectively. We then have
\begin{equation}
{\cal N}_c({\mathbf n}|{\mathbf 0},\sigma^2 {\mathbf I}) = \frac{1}{\pi^{N_r}\sigma^{2N_r}}\exp(-\frac{\sum_l n_{rv}^2 + n_{iv}^2}{\sigma^2}),
\end{equation}
\begin{equation}
d{\mathbf n} = \prod_{v=1}^{N_r}dn_{rv}dn_{iv},
\end{equation}
and
\begin{equation}
\begin{split}
&\sum_{m}\exp (-\frac{1}{\sigma^2}||{\mathbf n}-{\mathbf H}{\mathbf G}({\mathbf x}_k-{\mathbf x}_m)||^2) \\
&=\sum_{m}\exp (-\frac{1}{\sigma^2}(\sum_v({ n}_{rv}-({\mathbf H}{\mathbf G}({\mathbf x}_k-{\mathbf x}_m))_{rv})^2\\
&+\sum_v({ n}_{iv}-({\mathbf H}{\mathbf G}({\mathbf x}_k-{\mathbf x}_m))_{iv})^2)).
\end{split}
\end{equation}
The Gauss-Hermite quadrature is as follows:
\begin{equation}
\int_{-\infty}^{+\infty}\exp(-x^2)f(x)dx \approx \sum_{l=1}^L c(l) f(v_l),
\end{equation}
for any real function $f(x)$, and with vector ${\mathbf c} = [c(1)\cdots c(L)]^T$ being the ``weights,'' and $v_l$ are the ``nodes'' of the approximation. The approximation is based on the following weights and nodes \cite{G_H}
\begin{equation}
c(l) = \frac{2^{L-1}L! {\sqrt2\pi}}{L^2 (H_{L-1}(v_l))^2}
\end{equation}
where $H_{L-1}(x) = (-1)^{L-1} \exp(x^2) \frac{d^{L-1}}{dx^{L-1}}(\exp(-x^2))$ is the $(L-1)$-th order Hermitian polynomial, and the value of the node
$v_l$ equals the root of $H_L(x)$ for $l=1,2,\cdots,L$.

Applying the Gauss-Hermite quadrature $2N_r$ times in tandem to the integral in (\ref{eq_special}), and after changing variables, we get that
\begin{equation}
\begin{split}
f_k \approx {\hat f}_k =&  \left(\frac{1}{\pi}\right)^{N_r}\sum_{k_{r1}=1}^{L} \sum_{k_{i1}=1}^{L}\cdots \sum_{k_{rN_r}=1}^{L} \sum_{k_{iN_r}=1}^{L} c(k_{r1})c(k_{i1})\cdots c(k_{rN_r})\\
& \times c(k_{iN_r})g_k(\sigma n_{{k_{r1}}}, \sigma n_{{k_{i1}}},\cdots,\sigma n_{{k_{rN_r}}}, \sigma n_{{k_{iN_r}}}), \label{eq_f}
\end{split}
\end{equation}
where
\begin{equation}
g_k(\sigma n_{{k_{r1}}}, \sigma n_{{k_{i1}}},\cdots,\sigma n_{{k_{rN_r}}}, \sigma n_{{k_{iN_r}}})
\end{equation}
is the value of the function (from (\ref{eq_f}))
\begin{equation}
\log_2\left(\sum_{m}\exp (-\frac{1}{\sigma^2}||{\mathbf n}-{\mathbf H}{\mathbf G}({\mathbf x}_k-{\mathbf x}_m)||^2)\right) \label{eq_basic}
\end{equation}
evaluated at ${\mathbf n}_e =\sigma{\mathbf v}(\{k_{rv},~k_{iv}\}_{v=1}^{N_r})$.

For the special case $N_r=2$, (\ref{eq_f}) becomes
\begin{equation}
\begin{split}
{\hat f}_k =&  \sum_{k_{r1}=1}^{L} \sum_{k_{i1}=1}^{L} \sum_{k_{r2}=1}^{L} \sum_{k_{i2}=1}^{L} c(k_{r1})c(k_{i1}) \\
& \times c(k_{r2})c(k_{i2})g_k(\sigma n_{{k_{r1}}}, \sigma n_{{k_{i1}}},\sigma n_{{k_{r2}}}, \sigma n_{{k_{i2}}}),
\end{split}
\end{equation}
which using basic properties of bilinear forms can be rewritten as
\begin{equation}
\begin{split}
{\hat f}_k =&  \left(\frac{1}{\pi}\right)^2\sum_{k_{r1}=1}^{L} \sum_{k_{i1}=1}^{L} \sum_{k_{r2}=1}^{L} \sum_{k_{i2}=1}^{L} c(k_{r1})c(k_{i1}) c(k_{r2})\\
& \times c(k_{i2}) g_k(\sigma n{{k_{r1}}}, \sigma n_{{k_{i1}}},\sigma n_{{k_{r2}}}, \sigma n_{{k_{i2}}})\\
& = \left(\frac{1}{\pi}\right)^2{\mathbf c}^t {\mathbf F}_k {\mathbf c}, \label{eval}
\end{split}
\end{equation}
where
${\mathbf F}_k$ is an $L\times L$ matrix with $k_{r1},k_{i1}$ element equal to
\begin{equation}
{\mathbf F}_k[k_{r1},k_{i1}] = {\mathbf c}^t {\mathbf V}_k {\mathbf c}, \label{eq_F}
\end{equation}
with ${\mathbf V}_k$ being an $L \times L$ matrix with $k_{r2},k_{i2}$ element equal to
\begin{equation}
{\mathbf V}_k[k_{r2},k_{i2}] = g_k(\sigma n_{{k_{r1}}}, \sigma n_{{k_{i1}}},\sigma n_{{k_{r2}}}, \sigma n_{{k_{i2}}}), \label{eq_compl}
\end{equation}
so that the different $f_k$ can be approximated efficiently, and then summing them over the different ${\mathbf x}_k$, as per (\ref{eq_first}), we get an
approximation for $H({\mathbf x}|{\mathbf y})$,
\begin{equation}\small
\begin{split}
H({\mathbf x}|{\mathbf y})  \approx & \frac{N_r}{\log(2)} +\frac{1}{\pi^2M^{N_t}}\sum_{k}{\hat f}_k\\
=&\frac{N_r}{\log(2)} +\frac{1}{\pi^2 M^{N_t}}\sum_{k}{\mathbf c}^t {\mathbf V}_k {\mathbf c}
=\frac{N_r}{\log(2)} +\frac{1}{\pi^2M^{N_t}}{\mathbf c}^t (\sum_{k}{\mathbf V}_k) {\mathbf c}
=\frac{N_r}{\log(2)} +\frac{1}{\pi^2M^{N_t}}{\mathbf c}^t {\mathbf V} {\mathbf c},
\end{split}
\end{equation}
where ${\mathbf V}  \doteq \sum_{k}{\mathbf V}_k.$


\section{Derivation of $\nabla_{\mathbf W}I$ through the Gauss-Hermite approximation}
Without loss of generality, let's assume that $N_t=N_r$. Using Theorem \ref{thm_1}, we can write by using the Gauss-Hermite approximation with ${\mathbf M}$ instead of ${\mathbf H}{\mathbf G}$,
\begin{equation}
I({\mathbf x};{\mathbf y})  \approx N_t\log_2(M) -\frac{N_r}{\log(2)} -\frac{1}{M^{N_t}}\sum_{k}{\hat f}_k \label{eq_PAPER_M}.
\end{equation}

In order to derive the gradient of $I({\mathbf x};{\mathbf y})$ with respect to ${\mathbf W}$, we first derive the gradient of $I({\mathbf x};{\mathbf y})$ with respect to ${\mathbf M}^*$. Start with the differential of $I({\mathbf x};{\mathbf y})$ with respect to ${\mathbf M}^*$ in (\ref{eq_PAPER_M}) and approximate the $f_k$ by ${\hat f}_k$,
\begin{equation}
\begin{split}
d_{{\mathbf M}^{*}}I({\mathbf x};{\mathbf y})  \approx &   -\frac{1}{M^{N_t}}d_{{\mathbf M}^{*}}\left(\sum_{k}{\hat f}_k\right)\\
=&  -\frac{1}{M^{N_t}}\left(\frac{1}{\pi}\right)^{N_r}\sum_{k_{r1}=1}^{L} \sum_{k_{i1}=1}^{L}\cdots \sum_{k_{rN_r}=1}^{L} \sum_{k_{iN_r}=1}^{L} c(k_{r1})c(k_{i1})\cdots c(k_{rN_r})\\
&\times c(k_{iN_r})\sum_{k}d_{{\mathbf M}^{*}}\left(g_k(\sigma n_{{k_{r1}}}, \sigma n_{{k_{i1}}},\cdots,\sigma n_{{k_{rN_r}}}, \sigma n_{{k_{iN_r}}})\right).
\label{eq_PAPER_M1}
\end{split}
\end{equation}
Taking into account that $g_k(\sigma n_{{k_{r1}}}, \sigma n_{{k_{i1}}},\cdots,\sigma n_{{k_{rN_r}}}, \sigma n_{{k_{iN_r}}})$ is the value of
$$\log_2\left(\sum_{m}\exp (-\frac{1}{\sigma^2}||{\mathbf n}-{\mathbf M}({\mathbf x}_k-{\mathbf x}_m)||^2)\right)$$
evaluated at ${\mathbf n}_e =\sigma{\mathbf v}(\{k_{rv},~k_{iv}\}_{v=1}^{N_r})$, we can develop (\ref{eq_PAPER_M1}) further, by using well-known results \cite{diff}, as follows
\begin{equation}
\begin{split}
d_{{\mathbf M}^{*}}I({\mathbf x};{\mathbf y})  \approx & -\frac{1}{\log(2)} \frac{1}{M^{N_t}}\left(\frac{1}{\pi}\right)^{N_r}\sum_{k_{r1}=1}^{L} \sum_{k_{i1}=1}^{L}\cdots \sum_{k_{rN_r}=1}^{L} \sum_{k_{iN_r}=1}^{L} c(k_{r1})c(k_{i1})\cdots c(k_{rN_r})\\
  & \times c(k_{iN_r})\mathrm{tr}\left(\{{\mathbf R}(\sigma n_{r1,{k_{r1}}}, \sigma n_{i1,{k_{i1}}},\cdots,\sigma n_{rN_r,{k_{rN_r}}}, \sigma n_{iN_r,{k_{iN_r}}})\}^t d{\mathbf M^*}\right),\label{eq_PAPER_M2}
\end{split}
\end{equation}
where ${\mathbf R}(\sigma n_{{k_{r1}}}, \sigma n_{{k_{i1}}},\cdots,\sigma n_{{k_{rN_r}}}, \sigma n_{{k_{iN_r}}})$ is the value of
\begin{equation}
\begin{split}
&\sum_{k}\frac{1}{\sum_{l}\exp (-\frac{1}{\sigma^2}||{\mathbf n}-{\mathbf M}({\mathbf x}_k-{\mathbf x}_l)||^2)}
\sum_{m}\left( \exp (-\frac{1}{\sigma^2}||{\mathbf n}-{\mathbf M}({\mathbf x}_k-{\mathbf x}_m)||^2)\right.\\
&\left. \times \left( ({\mathbf n}-{\mathbf M}({\mathbf x}_k-{\mathbf x}_m))({\mathbf x}_k-{\mathbf x}_m)^h  \right)\vphantom{\exp (-\frac{1}{\sigma^2}||{\mathbf n}-{\mathbf M}({\mathbf x}_k-{\mathbf x}_m)||^2)}\right)
\end{split}
\end{equation}
evaluated at ${\mathbf n}_e =\sigma{\mathbf v}(\{k_{rv},~k_{iv}\}_{v=1}^{N_r})$. In this derivation we have used the fact that
\begin{equation}
\begin{split}
&d_{{\mathbf M}}\left(\exp(-\frac{||{\mathbf n}-{\mathbf M}({\mathbf x}_k-{\mathbf x}_m)||^2}{\sigma^2})\right)=\frac{1}{\sigma^2}\times \\
&\exp(-\frac{||n-{\mathbf M}({\mathbf x}_k-{\mathbf x}_m)||^2}{\sigma^2})({\mathbf n}-{\mathbf M}({\mathbf x}_k-{\mathbf x}_m))({\mathbf x}_k-{\mathbf x}_m)^h.
\end{split}
\end{equation}
Using the fact that for a Hermitian matrix such as ${\mathbf M}$, we need to add the Hermitian of the differential above in order to evaluate the actual gradient (see \cite{diff}), we get (\ref{eq_grad_M}).

Now, since ${\mathbf W} = {\mathbf M}^2$, by taking vectors of the matrices on each side of this equation and applying some identities from \cite{diff}, we get that
\begin{equation}
d\mathrm{vec}({\mathbf W}) = \left(({\mathbf M}^*)\otimes{\mathbf I} + {\mathbf I}\otimes {\mathbf M}
 \right)d\mathrm{vec}({\mathbf M}), \label{eq_final}
\end{equation}
where the notation $\mathrm{vec}({\mathbf A})$ denotes the vector found from matrix ${\mathbf A}$ by taking its columns one at a time, starting from the leftmost one. Since ${\mathbf M} = {\mathbf W}^{\frac{1}{2}}$, it is straightforward to derive (\ref{eq_special}) by employing (\ref{eq_final}) as follows. The relationship between the gradient and the derivative is through reshaping the derivative row vector \cite{diff}, we can thus write
\begin{equation}
\nabla_{\mathbf W}I = \mathrm{reshape}\left((\mathrm{vec}(\nabla_{\mathbf M}I)^t \left(({\mathbf M}^*)\otimes{\mathbf I} + {\mathbf I}\otimes {\mathbf M}
 \right)^{-1}),N_t,N_t\right),
\end{equation}
where $\mathrm{reshape}({\mathbf A},k,n)$ is the standard reshape of matrix ${\mathbf A}$ with total elements $kn$ to a matrix with $k$ rows, $n$ columns.
\section{Evaluation of Computational Complexity in Calculating $I({\mathbf x};{\mathbf y})$ through Gauss-Hermite Quadrature}
Let's start with $N_r=2$. Then, the complexity involved in the Gauss-Hermite quadrature approximation of $I({\mathbf x};{\mathbf y})$ is determined by the one required in
the calculation of $f_k$ in (\ref{eval}). From (\ref{eq_F}), this is equal to $(2L-1)^2$ times the number of operations, including summations and products, required in evaluating each element of the matrix ${\mathbf V}_k$. Since ${\mathbf V}_k$ is a size $L\times L$ matrix with elements given in (\ref{eq_compl}), it can be seen from (\ref{eq_basic}) that the complexity of each element ${\mathbf V}_k[k_{r2},k_{i2}]$ is $L^2 N_r (2N_t+N_r-1)$. Since there are $M^{N_t}$ summation terms over $k$ (the size of the overall multiple input constellation), the total complexity becomes $(2L-1)^4 L^4 N_r (2N_t+N_r-1)$. For a general value of $N_r$, in a similar fashion, the corresponding complexity becomes $M^{N_t}(2L-1)^{(2N_r)} L^4 N_r (2N_t+N_r-1)$.
\bibliographystyle{IEEEtran}


\begin{thebibliography}{10}
\providecommand{\url}[1]{#1}
\csname url@samestyle\endcsname
\providecommand{\newblock}{\relax}
\providecommand{\bibinfo}[2]{#2}
\providecommand{\BIBentrySTDinterwordspacing}{\spaceskip=0pt\relax}
\providecommand{\BIBentryALTinterwordstretchfactor}{4}
\providecommand{\BIBentryALTinterwordspacing}{\spaceskip=\fontdimen2\font plus
\BIBentryALTinterwordstretchfactor\fontdimen3\font minus
  \fontdimen4\font\relax}
\providecommand{\BIBforeignlanguage}[2]{{%
\expandafter\ifx\csname l@#1\endcsname\relax
\typeout{** WARNING: IEEEtran.bst: No hyphenation pattern has been}%
\typeout{** loaded for the language `#1'. Using the pattern for}%
\typeout{** the default language instead.}%
\else
\language=\csname l@#1\endcsname
\fi
#2}}
\providecommand{\BIBdecl}{\relax}
\BIBdecl

\bibitem{Aya1}
E.~Akay, E.~Sengul, and E.~Ayanoglu, ``{B}it-{I}nterleaved {C}oded {M}ultiple
  {B}eamforming,'' \emph{IEEE Transactions on Communications}, vol.~55, pp.
  1802--1810, September 2007.

\bibitem{Aya2}
B.~Li, H.~J. Park, and E.~Ayanoglu, ``{C}onstellation {P}recoded {M}ultiple
  {B}eamforming,'' \emph{IEEE Transactions on Communications}, vol.~59, pp.
  1275--1286, May 2011.

\bibitem{Aya3}
B.~Li and E.~Ayanoglu, ``Diversity {A}nalysis of {B}it-{I}nterleaved {C}oded
  {M}ultiple {B}eamforming with {O}rthogonal {F}requency {D}ivision
  {M}ultiplexing,'' \emph{IEEE Transactions on Communications}, vol.~61, pp.
  3794--3805, September 2013.

\bibitem{Giann_max_div}
Y.~Xin, Z.~Wang, and G.~Giannakis, ``Space-{T}ime {D}iversity {S}ystems {B}ased
  on {L}inear {C}onstellation {P}recoding,'' \emph{IEEE Transactions on
  Wireless Communications}, vol.~2, pp. 294--309, March 2003.

\bibitem{Tarokh}
V.~Tarokh, N.~Seshadri, and A.~Calderbank, ``Space-{T}ime {C}odes for {H}igh
  {D}ata {R}ate {C}ommunications: {P}erformance {C}riterion and {C}ode
  {C}onstruction,'' \emph{IEEE Transactions on Information Theory}, vol.~44,
  pp. 744--765, March 1998.

\bibitem{tenBrink}
S.~ten Brink, G.~Kramer, and A.~Ashihkmin, ``{D}esign of {L}ow-{D}ensity
  {P}arity-{C}heck {C}odes for {M}odulation and {D}etection,'' \emph{IEEE
  Transactions on Communications}, vol.~52, pp. 670--678, April 2004.

\bibitem{Alexei}
Y.~Jian, A.~Ashihkmin, and N.~Sharma, ``{LDPC} {C}odes for {F}lat {R}ayleigh
  {F}ading {C}hannels with {C}hannel {S}ide {I}nformation,'' \emph{IEEE
  Transactions on Communications}, vol.~56, pp. 1207--1213, August 2008.

\bibitem{Verdu1}
F.~Perez-Cruz, M.~Rodriguez, and S.~Verdu, ``{MIMO} {G}aussian {C}hannels with
  {A}rbitrary {I}nputs: {O}ptimal {P}recoding and {P}ower {A}llocation,''
  \emph{IEEE Transactions on Information Theory}, vol.~56, pp. 1070--1086,
  March 2010.

\bibitem{Verdu2}
A.~Lozano, A.~Tulino, and S.~Verdu, ``{O}ptimal {P}ower {A}llocation for
  {P}arallel {G}aussian {C}hannels {W}ith {A}rbitrary {I}nput
  {D}istributions,'' \emph{IEEE Transactions on Information Theory}, vol.~52,
  pp. 3024--3051, July 2006.

\bibitem{Payaro}
M.~Payaro and D.~Palomar, ``On {O}ptimal {P}recoding in {L}inear {V}ector
  {G}aussian {C}hannels with {A}rbitrary {I}nput {D}istribution,'' in
  \emph{Proceedings {IEEE} International Symposium on Information Theory},
  2009.

\bibitem{Fettweis}
E.~Ohlmer, U.~Wachsmann, and G.~Fettweis, ``Mutual {I}nformation {M}aximizing
  {L}inear {P}recoding for {P}arallel {L}ayer {MIMO} {D}etection,'' in
  \emph{Proceedings 12th International Workshop on Signal Processing Advances
  in Wireless Communications}, 2011.

\bibitem{Xiao}
C.~Xiao, Y.~Zheng, and Z.~Ding, ``{G}lobally {O}ptimal {L}inear {P}recoders for
  {F}inite {A}lphabet {S}ignals {O}ver {C}omplex {V}ector {G}aussian
  {C}hannels,'' \emph{IEEE Transactions on Signal Processing}, vol.~59, pp.
  3301--3314, July 2011.

\bibitem{Lamarca}
M.~Lamarca, ``{L}inear {P}recoding for {M}utual {I}nformation {M}aximization in
  {MIMO} {S}ystems,'' in \emph{Proceedings International Symposium of Wireless
  Communication Systems 2009}.

\bibitem{Khan}
D.~Shiu, G.~Foschini, M.~Gans, and J.~Kahn, ``{F}ading {C}orrelation and {I}ts
  {E}ffect on the {C}apacity of {M}ultielement {A}ntenna {S}ystems,''
  \emph{IEEE Transactions on Communications}, vol.~48, pp. 502--513, March
  2000.

\bibitem{Weich}
W.~Weichselberger, M.~Herdin, H.~Ozcelik, and E.~Bonek, ``A {S}tochastic {MIMO}
  {C}hannel {M}odel with {J}oint {C}orrelation of {B}oth {L}inks,'' \emph{IEEE
  Transactions on Wireless Communications}, vol.~5, pp. 90--100, January 2006.

\bibitem{SCSI}
Y.~Wu, C.-K. Wen, C.~Xiao, X.~Gao, and R.~Schober, ``Linear precoding for the
  {MIMO} {M}ultiple {A}ccess {C}hannel {W}ith {F}inite {A}lphabet {I}nputs and
  {S}tatistical {CSI},'' \emph{IEEE Transactions on Wireless Communications},
  pp. 983--997, February 2015.

\bibitem{Lozano}
Y.~Wu, C.-K. Wen, D.~Ng, R.~Schober, and A.~Lozano, ``{L}ow-{C}omplexity {MIMO}
  {P}recoding with {D}iscrete {S}ignals and {S}tatistical {CSI},'' manuscript
  submitted for publication.

\bibitem{G_H}
M.~Abramowitz and I.~Stegun, \emph{Handbookof {M}athematical {F}unctions with
  {F}ormulas, {G}raphs, and {M}athematical {T}ables}.\hskip 1em plus 0.5em
  minus 0.4em\relax Washington D.C.: U.S. Government Printing Office, 1972.

\bibitem{Boyd}
S.~Boyd and L.~Vandenberghe, \emph{Convex Optimization}.\hskip 1em plus 0.5em
  minus 0.4em\relax Cambridge, UK: Cambridge University Press, 2004.

\bibitem{TE_TWC}
T.~Ketseoglou and E.~Ayanoglu, ``Linear precoding for {MIMO} with {LDPC}
  {C}oding and {R}educed {C}omplexity,'' \emph{IEEE Transactions on Wireless
  Communications}, pp. 2192--2204, April 2015.

\bibitem{SCM}
J.~Salo, G.~D. Galdo, J.~Salmi, P.~Kyosti, M.~Milojevic, D.~Laselva, and
  C.~Schneider, ``{MATLAB} {I}mplementaion of the {3GPP} {S}patial {C}hannel
  {M}odel,'' Available at \url{http://www.tkk.fi/Units/Radio/scm/}.

\bibitem{diff}
A.~Hj$\emptyset$rugnes, \emph{Complex-Valued Matrix Derivatives With
  Applications in Signal Processing and Communications}.\hskip 1em plus 0.5em
  minus 0.4em\relax Cambridge, UK: Cambridge University Press, 2011.

\bibitem{Xiao3}
Y.~Wu, C.~Wen, C.~Xiao, X.~Gao, and R.~Schober, ``{L}inear {P}recoding for the
  {MIMO} {M}ultiple {A}ccess {C}hannel with {F}inite {A}lphabet {I}nputs and
  {S}tatistical {CSI},'' \emph{IEEE Transactions on Signal Processing},
  vol.~60, pp. 3134--3148, February 2012.

\bibitem{Xiao2}
W.~Zeng, C.~Xiao, and J.~Lu, ``A {L}ow {C}omplexity {D}esign of {L}inear
  {P}recoding for {MIMO} {C}hannels with {F}inite {A}lphabet {I}nputs,''
  \emph{IEEE Wireless Communications Letters}, vol.~1, pp. 38--42, February
  2012.

\bibitem{Marzetta1}
T.~Marzetta, ``{N}oncooperative {C}ellular {W}ireless with {U}nlimited
  {N}umbers of {B}ase {S}tation {A}ntennas,'' \emph{IEEE Transactions on
  Wireless Communications}, vol.~9, pp. 3590--3600, November 2010.

\bibitem{Marzetta2}
J.~Jose, A.~Ashikhmin, T.~Marzetta, and S.~Vishwanath, ``{P}ilot
  {C}ontamination and {P}recoding in {M}ulti-{C}ell {TDD} {S}ystems,''
  \emph{IEEE Transactions on Wireless Communications}, vol.~10, pp. 2640--2651,
  August 2011.

\bibitem{Marzetta3}
H.~Ngo, E.~Larsson, and T.~Marzetta, ``{E}nergy and {S}pectral {E}fficiency of
  {V}ery {L}arge {M}ultiuser {MIMO} {S}ystems,'' \emph{IEEE Transactions on
  Communications}, vol.~61, pp. 1436--1449, April 2013.

\end{thebibliography}

\end{document}